\pdfoutput=1


\documentclass[10pt, a4paper]{article}

\usepackage[utf8]{inputenc}
\usepackage[T1]{fontenc}
\usepackage{lmodern} 

\usepackage[margin=2cm]{geometry}

\frenchspacing

\usepackage[activate={true,nocompatibility},final,tracking=true,
kerning=true,spacing=true,factor=1100,stretch=10,shrink=10]{microtype}
\SetTracking{encoding={*}, shape=sc}{40}

\usepackage{ellipsis}         
\usepackage[abbreviations,british]{foreign} 

\usepackage{xcolor}
\usepackage{hyperref}
\usepackage[xcolor, hyperref, notion, paper]{knowledge}

\definecolor{darkmidnightblue}{rgb}{0.0, 0.2, 0.4}
\definecolor{persianplum}{rgb}{0.44, 0.11, 0.11}

\hypersetup{
  colorlinks  = true,        
  citecolor   = persianplum, 
  urlcolor    = persianplum, 
  linkcolor   = persianplum
}

\usepackage{amssymb} 
\usepackage{amsmath}
\usepackage{amsthm}

\usepackage[capitalise, noabbrev]{cleveref}

\newtheorem{thm}{Theorem}[section]
\newtheorem{fact}[thm]{Fact}
\newtheorem{lemma}[thm]{Lemma}
\newtheorem{proposition}[thm]{Proposition}

\usepackage{caption}

\usepackage{tikz}
\usetikzlibrary{hobby,backgrounds,calc,trees}
\usetikzlibrary{arrows, decorations.markings, shapes, calc}
\usepackage{float}

\newcommand{\orcid}[1]{\href{https://orcid.org/#1}{\includegraphics[width=9pt]{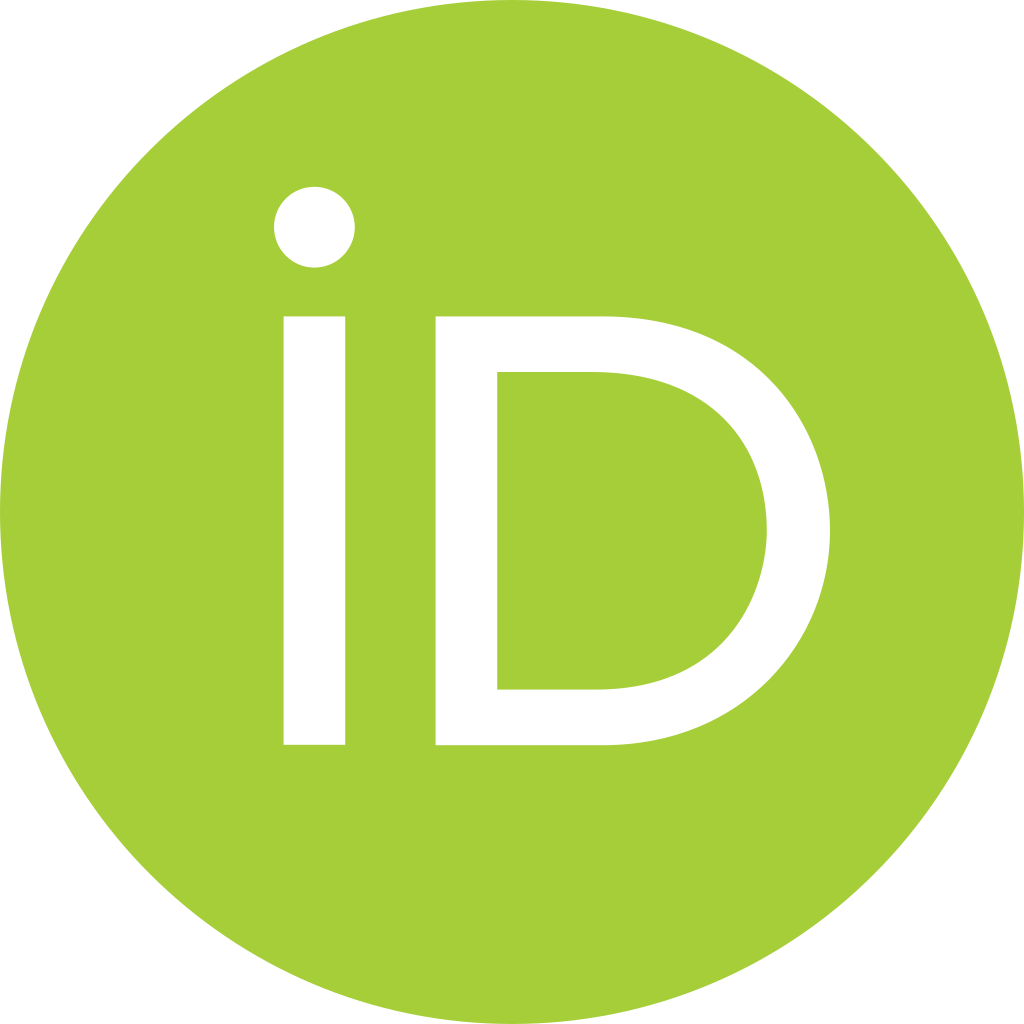}}}

\usepackage{enumerate}


\newrobustcmd{\newnotion}[1]{\AP\intro{#1}}

\newcommand{\dland}{\sqcap}
\newcommand{\topconcept}{\top}
\newcommand{\dlsubseteq}{\sqsubseteq}

\newcommand{\DL}[1]{\ensuremath{\mathcal{#1}}}                          
\newcommand{\ALC}{\DL{ALC}}                                             
\newcommand{\EL}{\kl[EL concept descriptions]{\DL{EL}}}                 
\newcommand{\ELneg}{\kl[ELneg concept descriptions]{\DL{EL}^{(\neg)}}}  
\newcommand{\SEL}{\kl[SEL]{\DL{SEL}}}                                   

\newcommand{\complexityclass}[1]{\textsc{#1}}               
\newcommand{\ExpTime}{\complexityclass{ExpTime}}            
\hyphenation{Exp-Time}                                      

\newcommand{\lang}[1]{\mathbf{#1}}                          
\newcommand{\Rlang}{\kl[role names]{\lang{N_R}}}            
\newcommand{\Clang}{\kl[concept names]{\lang{N_C}}}         

\newcommand{\role}[1]{\kl[role]{\mathit{#1}}}       
\newcommand{\roler}{\role{r}}                       

\newcommand{\concepts}{\lang{C}}                                                    
\newcommand{\elconcepts}{\kl[EL concept descriptions]{\concepts_{\EL}}}             
\newcommand{\elnegconcepts}{\kl[ELneg concept descriptions]{\concepts_{\ELneg}}}    

\newcommand{\concept}[1]{\kl[atomic concept]{\mathrm{#1}}}       

\newcommand{\conceptA}{\concept{A}}                
\newcommand{\conceptB}{\concept{B}}                
\newcommand{\conceptC}{\concept{C}}                
\newcommand{\conceptD}{\concept{D}}                
\newcommand{\conceptE}{\concept{E}}                
\newcommand{\conceptnegA}{\bar{\conceptA}}         
\newcommand{\conceptnegB}{\bar{\conceptB}}         

\newcommand{\conceptReal}{\concept{Real}}                

\newcommand{\plus}[1]{{#1}_{+}}
\newcommand{\conceptAplus}{\plus{\conceptA}}           
\newcommand{\conceptBplus}{\plus{\conceptB}}           
\newcommand{\conceptRealplus}{\plus{\conceptReal}}     

\newcommand{\minus}[1]{{#1}_{-}}
\newcommand{\conceptAminus}{\minus{\conceptA}}           
\newcommand{\conceptBminus}{\minus{\conceptB}}           
\newcommand{\conceptRealminus}{\minus{\conceptReal}}     

\newcommand{\ontology}[1]{\mathcal{#1}}                   
\newcommand{\ontologyO}{\kl[ontology]{\ontology{O}}}      
\newcommand{\ontologyOhard}{\kl[ontologyOhard]{\ontologyO}} 
\newcommand{\ontologyOred}{\kl[ontologyOred]{\ontologyO_{\textit{red}}}} 
\newcommand{\ontologyOcorr}{\kl[ontologyOcorr]{\ontologyO_{\textit{corr}}}} 
\newcommand{\ontologyOtr}{\kl[ontologyOtr]{\ontologyO_{\textit{tr}}}} 

\newcommand{\inter}[1]{\kl[interpretation]{\mathcal{#1}}}    
\newcommand{\interI}{\inter{I}}                              
\newcommand{\interJ}{\inter{J}}                              

\newcommand{\DeltaInter}[1]{\kl[domain]{\Delta}^{#1}}       
\newcommand{\DeltaI}{\DeltaInter{\interI}}                  
\newcommand{\DeltaJ}{\DeltaInter{\interJ}}                  

\newcommand{\cdotInter}[1]{\kl[interpretation function]{\cdot}^{#1}}    
\newcommand{\cdotI}{\cdotInter{\interI}}                                

\newcommand{\domelem}[1]{\mathrm{#1}}                           
\newcommand{\domelemd}{\domelem{d}}                             
\newcommand{\domeleme}{\domelem{e}}                             

\newcommand{\deff}{:=}

\newcommand{\Q}{\mathbb{Q}}

\newcommand{\mymodels}{\kl[satisfies]{\;\models{}}}
\newcommand{\pconditional}[4]{\left( #1 \mid #2\right)\left[ #3, #4 \right]}

\newcommand{\ohardconcepts}{\kl[Ohardconcepts]{\concepts_{\ontologyOhard}}} 



\knowledge{role names}[role name]{notion, color=black}
\knowledge{concept names}[concept name]{notion, color=black}
\knowledge{atomic concept}[concept | Concepts | concepts]{notion, color=black}
\knowledge{top concept}{notion, color=black}
\knowledge{role}[roles]{notion, color=black}
\knowledge{size}[sizes]{notion, color=black}
\knowledge{interpretation}[interpretations]{notion, color=black}
\knowledge{interpretation function}{notion, color=black}
\knowledge{domain}[domains]{notion, color=black}
\knowledge{general concept inclusion}[GCI | GCIs]{notion, color=black}
\knowledge{ontology}[ontologies]{notion, color=black}
\knowledge{satisfies}[satisfaction | semantics | model]{notion, color=black}
\knowledge{consistent}[satisfiable | inconsistent | unsatisfiable | consistency]{notion, color=black}
\knowledge{consistency problem}[consistency problems]{notion, color=black}

\knowledge{EL concept descriptions}[\( \EL \) concept descriptions | \( \EL \) concepts | \( \EL \) concept | \( \EL \)]{notion, color=black}
\knowledge{ELneg concept descriptions}[\( \ELneg \) concept descriptions | \( \ELneg \) concepts | \( \ELneg \) concept | \( \ELneg \)]{notion, color=black}
\knowledge{normal form}{notion, color=black}


\knowledge{SEL}[\( \SEL \)]{notion, color=black}
\knowledge{probabilistic conditional}[probabilistic conditionals]{notion, color=black}


\knowledge{ontologyOhard}[\( \ontologyOhard \)]{notion, color=black}
\knowledge{ontologyOred}[\( \ontologyOred \)]{notion, color=black}
\knowledge{ontologyOcorr}[\( \ontologyOcorr \)]{notion, color=black}
\knowledge{ontologyOtr}[\( \ontologyOtr \)]{notion, color=black}
\knowledge{Ohardconcepts}[\( \ohardconcepts \)]{notion, color=black}


\title{Statistical \texorpdfstring{\(\EL{}\)}{EL} is \texorpdfstring{\(\ExpTime{}\)}{ExpTime}-complete}

\author{Bartosz Bednarczyk \orcid{0000-0002-8267-7554}{}}
\date{Computational Logic Group, Technische Universit{\"a}t Dresden, Germany\\
Institute of Computer Science, University of Wroc{\l}aw, Poland}

\begin{document}
\maketitle

\begin{abstract}
We show that the \kl{consistency problem} for Statistical \( \EL \) \kl{ontologies}, defined by Pe{\~{n}}aloza and Potyka, is \( \ExpTime \)-hard.
Together with existing  \( \ExpTime \) upper bounds, we conclude \( \ExpTime \)-completeness of the logic.
Our proof goes via a reduction from the \kl{consistency problem} for~\( \EL \) extended with negation of atomic concepts.
\end{abstract}

\section{Introduction}\label{sec:intro}

  Description logics (DLs)~\cite{dlbook} are a prominent family of logical formalisms tailored to knowledge representation.
  Nowadays, real-world problems require the ability to handle uncertain knowledge. 
  To deal with this issue, several probabilistic extensions of description logics were proposed in the past~\cite{CarvalhoLC17,Lukasiewicz08,Gutierrez-Basulto17,PenalozaP17}
  Among such extensions, the authors of~\cite{PenalozaP17} proposed Statistical \( \EL \), a statistical variant of the well-known description logic \( \EL \)~\cite{BaaderBL05} famous for tractability of most of its reasoning task.

  In this note we establish tight complexity bounds for the consistency problem for statistical \( \EL \), closing the complexity gaps from~\cite{PenalozaP17}.
  We show that in sharp contrast to its non-probabilistic version, Statistical \( \EL \) is \( \ExpTime \)-complete and hence, provably intractable.
  The main novelty here is the \( \ExpTime \) lower bound, while the \( \ExpTime \) upper bound follows from recent work by Baader and Ecke~\cite[Corollary 15]{BaaderE17} or, alternatively, from work on probabilistic \( \ALC \) by Lutz and Schr{\"{o}}der~\cite[Theorem 9]{LutzS10}.

\section{Preliminaries}\label{sec:prelim}

  In this section, we recall the basics on description logics (DLs) \( \EL \) and \( \ELneg \).
  For readers unfamiliar with DLs we recommend consulting the textbook~\cite{dlbook}, especially Chapters 2.1--2.3, 5.1 and 6.1.

  We fix countably-infinite disjoint sets of \newnotion{concept names} \( \Clang \) and \newnotion{role names} \( \Rlang \).
  Starting from \( \Clang \) and~\( \Rlang \), the set \( \elconcepts \) of \newnotion{\( \EL \) concept descriptions} (or simply \kl{\( \EL \) concepts})~\cite{BaaderBL05} is built using \emph{conjunction} \((\conceptC \dland \conceptD) \), \emph{existential restriction} (\(\exists{\roler}.\conceptC \)) and the \emph{top concept} (\(\topconcept \)), with the grammar~below:
  \begin{equation*} \label{eq:el-grammar}
  \conceptC, \conceptD \; ::= \; \topconcept \; \mid \; \conceptA \; \mid \; \conceptC \dland \conceptD \; \mid \; \exists{\roler}.\conceptC,
  \end{equation*}
  where \(\conceptC,\conceptD \in \elconcepts \), \(\conceptA \in \Clang \) and \(\roler \in \Rlang \). 
  An \(\EL \) \newnotion{general concept inclusion} (\kl{GCI}) has the form \(\conceptC \dlsubseteq \conceptD \) for \kl{\( \EL \) concepts} \(\conceptC, \conceptD \in \elconcepts \). 
  An \(\EL \) \newnotion{ontology} is a finite non-empty set of \(\EL \) \kl{GCIs}.  
  The \newnotion{size} of an \( \EL \) \kl{ontology} is the total number of \(\top\), \kl{role names}, \kl{concept names} and connectives occurring in it.

  \begin{table}[!htb]
        \caption{\kl{Concepts} and \kl{roles} in \(\EL \).\label{tab:EL}}
        \centering
          \begin{tabular}{@{}l@{\ \ \ }c@{\ \ \ }l@{}}
              \hline\\[-2ex]
              Name & Syntax & Semantics \\ \hline \\[-2ex]
              top & \(\topconcept \) & \(\DeltaI  \) \\
              atomic concept & \( \conceptA \) & \(\conceptA^{\interI} \subseteq \DeltaI  \) \\ 
              role & \(\roler \) & \(\roler^{\interI} \subseteq \DeltaI {\times} \DeltaI \) \\ 
              concept\ intersection & \(\conceptC \dland \conceptD \)& \(\conceptC^{\interI}\cap \conceptD^{\interI} \) \\  
              existential\ restriction & \(\exists{\roler}.\conceptC \) & 
              \( \big\{ \domelemd \; | \; \exists{\domeleme}.(\domelemd,\domeleme)\in \roler^{\interI} \land \domeleme\in \conceptC^{\interI} \big\} \)
      \end{tabular}
  \end{table}

  The \kl{semantics} of \( \EL \) is defined via \newnotion{interpretations} \(\interI = (\DeltaI, \cdotI) \) composed of a \emph{finite} non-empty set \(\DeltaI \) called the \newnotion{domain} of \(\interI \) and an \newnotion{interpretation function} \(\cdotI \) mapping \kl{concept names} to subsets of \(\DeltaI \), and \kl{role names} to subsets of \(\DeltaI \times \DeltaI \). 
  This mapping is extended to \newnotion{concepts}, \newnotion{roles} (\cf{}~\cref{tab:EL}) and finally used to define \kl{satisfaction} of \kl{GCIs}, namely \( \interI \mymodels \conceptC \dlsubseteq \conceptD \) iff \(\conceptC^{\interI} \subseteq \conceptD^{\interI}  \).
  We say that an \kl{interpretation}~\(\interI \) \newnotion{satisfies} an \kl{ontology} \( \ontologyO \) (or \(\interI \) is a \kl{model} of \( \ontologyO \), written: \(\interI \mymodels \ontologyO \)) if it \kl{satisfies} all \kl{GCIs} from \( \ontologyO \). 
  An \kl{ontology} is \newnotion{consistent} if it has a \kl{model} and \kl{inconsistent} otherwise. 
  In the \newnotion{consistency problem} for \( \EL \) we ask if an input \( \EL \) \kl{ontology} is \kl{consistent}.
  Note that the \kl{consistency problem} for \( \EL \) is trivial, \ie{} every \( \EL \) \kl{ontology} is \kl{consistent}.

  \subsection{\texorpdfstring{\(\EL{}\)}{EL} with atomic negation}

    The next definitions concern \( \ELneg \), the extension of \( \EL \) with negation of atomic \kl{concepts}.
    More precisely, the set \( \elnegconcepts \) of \newnotion{\( \ELneg \) concepts} is defined by a slight extension of the BNF grammar for \( \EL \):
    \begin{equation*} \label{eq:elneg-grammar}
    \conceptC, \conceptD \; ::= \; \topconcept \; \mid \; \conceptA \; \mid \; \conceptnegA \; \mid \; \conceptC \dland \conceptD \; \mid \; \exists{\roler}.\conceptC,
    \end{equation*}
    where \(\conceptC,\conceptD \in \elnegconcepts \), \(\conceptA \in \Clang \) and \(\roler \in \Rlang \). 
    The \kl{semantics} of \( \ELneg \) \kl{concepts} is defined as in~\cref{tab:EL} with the exception that the \kl{concepts} of the form \( \conceptnegA \) have the \kl{semantics} \( \conceptnegA^{\interI} = \DeltaI \setminus \conceptA^{\interI} \). 
    The notions of \kl{GCIs}, \kl{ontologies} and the \kl{consistency problem} are lifted to \( \ELneg \) in an obvious way.
    We stress that in the presence of negation the \kl{consistency problem} for \( \ELneg \) is no longer trivial and actually is \( \ExpTime \)-complete~\cite[Theorem~6]{BaaderBL05}.\footnote{
    In the setting of \cite{BaaderBL05} the \kl{domains} of \kl{interpretations} might have unrestricted (\ie{} not necessarily finite) sizes, but the result is also applicable to our scenario since \( \ELneg \) has finite model property, which follows from~\cite[Corr.~3.17]{dlbook}.}
    \begin{proposition}\label{prop:exp-hard}
    The \kl{consistency problem} for \( \ELneg \) \kl{ontologies} is \( \ExpTime \)-hard.
    \end{proposition}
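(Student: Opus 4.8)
To prove \cref{prop:exp-hard} the plan is to reduce from the \kl{consistency problem} for $\ALC$ \kl{ontologies}, which is well known to be $\ExpTime$-hard~\cite{dlbook}; since $\ALC$ has the finite model property this lower bound already holds under the finite-\kl{domain} \kl{semantics} fixed in \cref{sec:prelim}. Without loss of generality I assume that the input $\ALC$ \kl{ontology} $\ontologyO$ uses only the constructors $\conceptC \dland \conceptD$, negation $\neg\conceptC$ of possibly complex concepts, and $\exists\roler.\conceptC$; the remaining $\ALC$ constructors are definable from these with a polynomial blow-up, and consistency stays $\ExpTime$-hard under the restriction.

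Given such an $\ontologyO$, let $\mathsf{Sub}$ collect all concepts that occur as a subconcept somewhere in $\ontologyO$. For every non-atomic $D \in \mathsf{Sub}$ I introduce a fresh \kl{concept name} $A_D$, and for atomic $D$ I put $A_D \deff D$. The point that makes the reduction work --- and the place where $\ELneg$ outruns $\EL$ --- is that, $A_D$ being a \kl{concept name}, $\bar{A}_D$ is a legal $\ELneg$ \kl{concept} that always denotes $\DeltaI \setminus A_D^{\interI}$, so an auxiliary name can be ``negated'' even though complex concepts cannot. The output $\ELneg$ \kl{ontology} $\ontologyO'$ then collects: for every $D = D_1 \dland D_2$ in $\mathsf{Sub}$, the two \kl{GCIs} $A_D \dlsubseteq A_{D_1} \dland A_{D_2}$ and $A_{D_1} \dland A_{D_2} \dlsubseteq A_D$; for every $D = \neg D_1$ in $\mathsf{Sub}$, the two \kl{GCIs} $A_D \dlsubseteq \bar{A}_{D_1}$ and $\bar{A}_{D_1} \dlsubseteq A_D$; for every $D = \exists\roler.D_1$ in $\mathsf{Sub}$, the two \kl{GCIs} $A_D \dlsubseteq \exists\roler.A_{D_1}$ and $\exists\roler.A_{D_1} \dlsubseteq A_D$; and, for every \kl{GCI} $E \dlsubseteq F$ of $\ontologyO$, the \kl{GCI} $A_E \dlsubseteq A_F$. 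Clearly $\ontologyO'$ has \kl{size} polynomial in that of $\ontologyO$.

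For correctness, any \kl{model} $\interI$ of $\ontologyO$ becomes a \kl{model} of $\ontologyO'$ once every fresh $A_D$ is interpreted as $D^{\interI}$: all the ``synchronisation'' \kl{GCIs} then hold by the \kl{semantics} of the constructors, and each $A_E \dlsubseteq A_F$ holds because $\interI \mymodels E \dlsubseteq F$. Conversely, if $\interI \mymodels \ontologyO'$, a routine induction on the structure of $D$ shows $A_D^{\interI} = D^{\interI}$ for all $D \in \mathsf{Sub}$: the base case is the convention $A_D = D$; the $\dland$- and $\exists$-steps use the two matching \kl{GCIs}; and the $\neg$-step uses $A_{\neg D_1} \dlsubseteq \bar{A}_{D_1}$ and $\bar{A}_{D_1} \dlsubseteq A_{\neg D_1}$ together with $\bar{A}_{D_1}^{\interI} = \DeltaI \setminus A_{D_1}^{\interI} = \DeltaI \setminus D_1^{\interI}$. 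Hence every remaining \kl{GCI} $A_E \dlsubseteq A_F$ of $\ontologyO'$ says exactly $E^{\interI} \subseteq F^{\interI}$, so the restriction of $\interI$ to the $\ALC$ signature is a \kl{model} of $\ontologyO$. Thus $\ontologyO$ is \kl{consistent} iff $\ontologyO'$ is, which gives the claim; this is in essence the argument behind~\cite[Theorem~6]{BaaderBL05}.

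I expect the crux to be exactly the $\neg$-case of that induction: the auxiliary name $A_{\neg D_1}$ must be pinned down in \emph{both} directions, and the only way to express ``everything outside $A_{D_1}$'' in this language is the atomic-negation \kl{concept} $\bar{A}_{D_1}$. Plain $\EL$ has no such device --- in accordance with the fact recorded above that $\EL$-consistency is trivial --- so it is precisely this feature that accounts for the jump all the way to $\ExpTime$-hardness. (A slightly longer alternative would instead encode the word problem of polynomially space-bounded alternating Turing machines, using $\exists$-restrictions for existential configurations and atomic negation to mimic the universal ones.)
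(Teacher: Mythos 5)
Your reduction is correct, but note that the paper does not prove \cref{prop:exp-hard} at all: it simply imports the result from \cite[Theorem~6]{BaaderBL05}, adding only a footnote that the bound survives the finite-\kl{domain} \kl{semantics} fixed here because the logic has the finite model property. What you wrote is essentially the standard argument \emph{behind} that citation (as you yourself suspect): name every complex subconcept of an \( \ALC \) \kl{ontology} by a fresh \kl{concept name} \( A_D \), add the two definitional \kl{GCIs} per constructor, and use atomic negation \( \bar{A}_{D_1} \) to pin down the name introduced for \( \neg D_1 \); the induction establishing \( A_D^{\interI} = D^{\interI} \) in any \kl{model} of the translated \kl{ontology} is exactly the right backward step, and your remark that \( \ALC \)'s finite model property makes the source problem \( \ExpTime \)-hard already over finite \kl{domains} handles the caveat the paper relegates to its footnote. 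The trade-off is the obvious one: the paper buys brevity by outsourcing the hardness proof, whereas your self-contained reduction makes explicit the mechanism the paper relies on implicitly, namely that atomic negation plus \kl{GCIs} already simulates full \( \ALC \). Two small points to tidy up if you were to include this in full: treat \( \topconcept \) explicitly (e.g.\ set \( A_{\topconcept} \deff \topconcept \) and normalise \( \neg\topconcept \) away, since \( \bar{A} \) is only available for \kl{concept names}), and if your \( \ExpTime \)-hard source problem is stated as concept satisfiability w.r.t.\ a TBox rather than \kl{ontology} \kl{consistency}, recall the routine bridge via a \kl{GCI} such as \( \topconcept \dlsubseteq \exists{\role{u}}.\conceptC \) for a fresh \kl{role}; both are standard and covered by your citation of \cite{dlbook}.
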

   
  \subsection{Statistical \texorpdfstring{\(\EL{}\)}{EL}}

    Statistical \( \EL \), abbreviated as \newnotion{\( \SEL \)}, is a probabilistic DL introduced recently by Pe{\~{n}}aloza and Potyka~\cite[Section~4]{PenalozaP17} to reason about statistical properties over finite  \kl{domains}.
    Statistical \( \EL \) \kl{ontologies} are composed of \newnotion{probabilistic conditionals} of the form \( \pconditional{\conceptC}{\conceptD}{k}{l} \), where \( \conceptC, \conceptD \) are \( \EL \) \kl{concepts} from \( \elconcepts \) and \( k,l \in \Q \) are rational numbers satisfying \( 0 \leq k \leq l \leq 1 \).
    The \kl{size} of \( \SEL \) \kl{ontologies} is defined as in \( \EL \) except that the numbers in \kl{probabilistic conditionals} also contribute to the size and are measured in binary. 

    We say that an \kl{interpretation} \( \interI \) \kl{satisfies} a \kl{probabilistic conditional} \( \pconditional{\conceptC}{\conceptD}{k}{l} \) if:
    \[
      \text{either} \; \conceptD^{\interI} = \emptyset \; \text{or} \; k \leq \frac{| (\conceptC \dland \conceptD)^{\interI} |}{|\conceptD^{\interI}|} \leq l.
    \]
    Note that usual \( \EL \) \kl{GCIs} \( \conceptC \dlsubseteq \conceptD \) are equivalent to \( \pconditional{\conceptD}{\conceptC}{1}{1} \) (\cf~\cite[Proposition~4]{PenalozaP17}). 
    Hence, each \( \EL \) \kl{ontology} can be seen as an \( \SEL \) \kl{ontology} and we can freely use \kl{GCIs} in place of \kl{probabilistic conditionals}.

\section{Main result}\label{sec:main_result}

After introducing all the required definitions, we are ready to prove the main result of this note, namely:
\begin{thm}\label{thm:main}
The \kl{consistency problem} for \( \SEL \) is \( \ExpTime \)-complete. 
\end{thm}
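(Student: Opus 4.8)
The plan is to combine two ingredients. The \( \ExpTime \) membership is not the issue: Baader and Ecke~\cite{BaaderE17} already show that \kl{consistency} of \( \SEL \) \kl{ontologies} is decidable in \( \ExpTime \), so the remaining task is an \( \ExpTime \) lower bound. For that I would give a polynomial-time reduction from the \kl{consistency problem} for \( \ELneg \), which is \( \ExpTime \)-hard by \cref{prop:exp-hard}; by \cref{prop:normal-form} I may moreover assume the input \( \ELneg \) \kl{ontology} \( \ontologyOhard \) to be in \kl{normal form}, so that each of its \kl{GCIs} has one of the four shapes (i)--(iv) in which every occurring concept is a \kl{concept name}, a negated \kl{concept name}, or \( \topconcept \). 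The objective is to build, in polynomial time, an \( \SEL \) \kl{ontology} \( \ontologyOred \) that is \kl{consistent} if and only if \( \ontologyOhard \) is.

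The only real difficulty is that \( \SEL \) has no negation, so I must simulate \( \conceptnegA \). For each \kl{concept name} \( \conceptA \) occurring in \( \ontologyOhard \) I introduce fresh \kl{concept names} \( \conceptAplus \) and \( \conceptAminus \) and let a translation \( \tau \) send \( \conceptA \) to \( \conceptAplus \), \( \conceptnegA \) to \( \conceptAminus \), and \( \topconcept \) to \( \topconcept \). For faithfulness I need \( \conceptAminus \) to be exactly the complement of \( \conceptAplus \), i.e.\ both \emph{disjointness} \( \conceptAplus \dland \conceptAminus \dlsubseteq \bot \) and \emph{covering} \( \topconcept \dlsubseteq \conceptAplus \sqcup \conceptAminus \) — neither of which is an \( \EL \) inclusion. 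Disjointness is cheap: the conditional \( \pconditional{\conceptAplus \dland \conceptAminus}{\topconcept}{0}{0} \) forces \( (\conceptAplus \dland \conceptAminus)^{\interI} = \emptyset \) in every (necessarily non-empty) \kl{model}. Covering is the genuine obstacle, since a fixed set of conditionals can only pin the ratio \( |\conceptAplus^{\interI}|/|\DeltaI| \) to a constant, whereas covering asks for the complementary ratio, which depends on \( |\conceptAplus^{\interI}| \). My way out is to pin the ratio anyway: the conditionals \( \pconditional{\conceptAplus}{\topconcept}{1/2}{1/2} \) and \( \pconditional{\conceptAminus}{\topconcept}{1/2}{1/2} \), together with disjointness, force \( |\conceptAplus^{\interI}| = |\conceptAminus^{\interI}| = |\DeltaI|/2 \) and \( \conceptAplus^{\interI} \cap \conceptAminus^{\interI} = \emptyset \), hence \( \conceptAminus^{\interI} = \DeltaI \setminus \conceptAplus^{\interI} \).

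Demanding that \( \conceptAplus \) be exactly half of the domain is far too strong to read as a statement about an \( \ELneg \)-\kl{model} directly, so I let the other half be inert ``padding'' and relativise everything to a fresh \kl{concept name} \( \conceptReal \). Concretely, \( \ontologyOred \) consists of: the three cardinality conditionals above for every \kl{concept name} of \( \ontologyOhard \); the conditional \( \pconditional{\conceptReal}{\topconcept}{1/2}{1/2} \) (needed only to keep \( \conceptReal \) non-empty in every model, with \( 1/2 \) chosen to fit the construction below); and, for every \kl{GCI} of \( \ontologyOhard \), its \( \tau \)-image relativised to \( \conceptReal \) — e.g.\ type~(i) \( \conceptAqm \dlsubseteq \conceptBqm \) becomes \( \tau(\conceptAqm) \dland \conceptReal \dlsubseteq \tau(\conceptBqm) \), type~(iii) \( \conceptAqm \dlsubseteq \exists{\roler}.\conceptBqm \) becomes \( \tau(\conceptAqm) \dland \conceptReal \dlsubseteq \exists{\roler}.(\tau(\conceptBqm) \dland \conceptReal) \), and type~(iv) \( \exists{\roler}.\conceptBqm \dlsubseteq \conceptAqm \) becomes \( (\exists{\roler}.\tau(\conceptBqm)) \dland \conceptReal \dlsubseteq \tau(\conceptAqm) \) (type~(ii) like~(i)); recall that such \( \EL \) \kl{GCIs} are themselves \( \SEL \) \kl{probabilistic conditionals} of the form \( \pconditional{\cdot}{\cdot}{1}{1} \). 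Clearly \( \ontologyOred \) has polynomial size and is polynomial-time computable.

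For correctness, in the forward direction I turn a finite \kl{model} \( \interI \) of \( \ontologyOhard \) into a \kl{model} \( \interJ \) of \( \ontologyOred \) by keeping, for each \( \domelemd \in \DeltaI \), a ``real'' copy with the same concept memberships as \( \domelemd \) and a ``dual'' copy lying in precisely the complementary \kl{concept names}, retaining \( \roler \)-edges only between real copies and setting \( \conceptReal^{\interJ} \) to the set of real copies; then \( |\conceptAplus^{\interJ}| = |\DeltaI| = |\DeltaJ|/2 \) (real copies inside \( \conceptA \) together with dual copies outside \( \conceptA \)), likewise for \( \conceptAminus \) and \( \conceptReal \), so the cardinality conditionals hold, and the relativised \kl{GCIs} hold because on the real copies \( \conceptAplus,\conceptAminus \) mimic \( \conceptA,\conceptnegA \) exactly while the dual copies are outside \( \conceptReal \) and hence unconstrained. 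In the backward direction, from any \kl{model} \( \interJ \) of \( \ontologyOred \) the cardinality conditionals give \( \conceptAminus^{\interJ} = \DeltaJ \setminus \conceptAplus^{\interJ} \) for every relevant \kl{concept name} and \( \conceptReal^{\interJ} \neq \emptyset \); restricting \( \interJ \) to the sub-interpretation induced by \( \conceptReal^{\interJ} \) and reading \( \conceptA \) as \( \conceptAplus^{\interJ} \cap \conceptReal^{\interJ} \) yields a \kl{model} of \( \ontologyOhard \), since on \( \conceptReal^{\interJ} \) the translation \( \tau \) is truth-preserving (this uses \( \conceptAminus^{\interJ} = \DeltaJ \setminus \conceptAplus^{\interJ} \)) and the ``\( \dland \conceptReal \)'' conjuncts disappear, while the extra ``\( \dland \conceptReal \)'' inside the existentials of type~(iii) keeps the required \( \roler \)-successors inside the restricted domain. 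The step I expect to be delicate is exactly this faithful simulation of atomic negation — in particular squeezing the ``covering'' inclusion \( \topconcept \dlsubseteq \conceptAplus \sqcup \conceptAminus \) out of the meagre vocabulary of \( \SEL \) — and the ``fix the ratio to \( 1/2 \), pad, and relativise to \( \conceptReal \)'' device above is what I would use to get around it; the translation of \kl{GCIs} and the two model transformations are then routine bookkeeping.
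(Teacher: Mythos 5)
Your proposal is correct and follows essentially the same route as the paper: a reduction from \( \ELneg \) consistency in normal form, using conditionals \( \pconditional{\conceptAplus}{\topconcept}{0.5}{0.5} \), \( \pconditional{\conceptAminus}{\topconcept}{0.5}{0.5} \) plus disjointness to force \( \conceptAminus \) to be the exact complement of \( \conceptAplus \), relativising the translated \kl{GCIs} to a ``real'' concept, with a duplicate-and-pad model construction in one direction and restriction to the real part in the other. Your minor deviations (a single \( \conceptReal \) instead of the pair \( \conceptRealplus,\conceptRealminus \), disjointness written as \( \pconditional{\conceptAplus \dland \conceptAminus}{\topconcept}{0}{0} \) rather than \( \pconditional{\conceptAplus}{\conceptAminus}{0}{0} \), and relativisation conjoined at the top level of each \kl{GCI}) are equivalent to the paper's choices and do not affect correctness.
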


The \( \ExpTime \) upper bound follows from~\cite[Corollary 15]{BaaderE17} or~\cite[Theorem 9]{LutzS10}, hence we focus on the lower bound only.
Let \newnotion{\( \ontologyOhard \)} be an arbitrary \( \ELneg \) ontology. 
With~\newnotion{\( \ohardconcepts \)} we denote the set of all \kl{concept names} that appear (possibly under negation) in \( \ontologyOhard \).
We next design an \( \SEL \) \kl{ontology} \( \ontologyOred \) such that \( \ontologyOred \) is \kl{consistent} iff \( \ontologyOhard \) is and that \( \ontologyOred \) is only polynomially larger than~\( \ontologyOhard \).
It~will be composed of two \( \SEL \) \kl{ontologies}, \( \ontologyOtr \) and \( \ontologyOcorr \), responsible respectively for ``translating''~\( \ontologyOhard \) into~\( \SEL \) and for guaranteeing the correctness of the translation.

The main idea of the encoding is as follows. 
We first produce for each \kl{concept name} \( \conceptA \) from \( \ohardconcepts \)  two fresh, different, \kl{concepts} \( \conceptAplus, \conceptAminus \not\in \ohardconcepts \) intuitively intended to contain, respectively, all members of \( \conceptA \) and from its complement.
Due to the lack of negation, we clearly are not able to fully formalise the above intuition, but the best we can do is to enforce, with the \kl{ontology} \( \ontologyOcorr \), that these \kl{concepts} are interpreted as disjoint sets and each of them contains exactly half of the \kl{domain}.
This is sufficient for our purposes, since with fresh, pair-wise different, \kl{concepts} \(  \conceptReal, \conceptRealplus, \conceptRealminus \not\in \ohardconcepts \) we can separate the ``real'' \kl{model} of \( \ontologyOhard \) from the auxiliary parts required for the encoding. 
Finally, in the ``translation'' \kl{ontology} \( \ontologyOtr \) we state that the restriction of a \kl{model} of~\( \ontologyOred \) to \( \conceptRealplus \) satisfies \( \ontologyOhard \).
The translation simply changes all occurrences of \( \conceptA \) (resp. \( \conceptnegA \)) into \( \conceptAplus \) (resp.~\( \conceptAminus \)) and employs \( \conceptRealplus \) to relativise \kl{concepts}.

We start with the definition of \newnotion{\( \ontologyOcorr \)}.
\[
  \ontologyOcorr \deff \lbrace \pconditional{\conceptAplus}{\topconcept}{0.5}{0.5}, \pconditional{\conceptAminus}{\topconcept}{0.5}{0.5}, \pconditional{\conceptAplus}{\conceptAminus}{0}{0} \; \mid \; \conceptA \in \{ \conceptReal \} \cup \ohardconcepts \rbrace
\]
By unfolding the definition of \kl{probabilistic conditionals} we immediately conclude the following facts.
\begin{fact}\label{fact:half}
For any \kl{concept name} \( \conceptA \) we have that \( \interI \mymodels \pconditional{\conceptA}{\topconcept}{0.5}{0.5} \) iff \( |\DeltaI| \) is even and \( |\conceptA^{\interI}| = \frac{1}{2}|\DeltaI| \).
\end{fact}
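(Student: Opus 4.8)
The plan is to simply unfold the definition of \kl{satisfaction} of \kl{probabilistic conditionals} for the special case \( \conceptD = \topconcept \) and \( k = l = 0.5 \). The first thing I would record is that the degenerate clause in the definition is never triggered here: since \kl{domains} of \kl{interpretations} are required to be non-empty, we have \( \topconcept^{\interI} = \DeltaI \neq \emptyset \), so the case ``\( \conceptD^{\interI} = \emptyset \)'' cannot occur. Consequently \( \interI \mymodels \pconditional{\conceptA}{\topconcept}{0.5}{0.5} \) holds if and only if \( 0.5 \leq \frac{|(\conceptA \dland \topconcept)^{\interI}|}{|\DeltaI|} \leq 0.5 \), which is just the equality \( \frac{|(\conceptA \dland \topconcept)^{\interI}|}{|\DeltaI|} = \frac{1}{2} \).

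Next I would simplify the numerator. By the \kl{semantics} in \cref{tab:EL} we have \( (\conceptA \dland \topconcept)^{\interI} = \conceptA^{\interI} \cap \DeltaI = \conceptA^{\interI} \), so the condition becomes \( 2|\conceptA^{\interI}| = |\DeltaI| \). Since \( |\conceptA^{\interI}| \) is a non-negative integer, this equality holds exactly when \( |\DeltaI| \) is even and \( |\conceptA^{\interI}| = \frac{1}{2}|\DeltaI| \), giving both directions of the stated equivalence at once.

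There is no genuine obstacle in this argument; it is a pure unfolding of definitions. The only subtlety worth making explicit (and the single place where one could slip) is the appeal to non-emptiness of \( \DeltaI \), which is what lets us discard the ``\( \conceptD^{\interI} = \emptyset \)'' disjunct — without it the ``only if'' direction would fail. The analogous facts about \( \pconditional{\conceptAplus}{\conceptAminus}{0}{0} \) (disjointness) and about the relativised conditionals used later can be obtained by exactly the same routine unfolding.
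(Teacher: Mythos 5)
Your proof is correct and matches the paper, which simply declares this fact to follow ``by unfolding the definition of probabilistic conditionals''; your write-up is that unfolding made explicit, including the one point worth noting (non-emptiness of \( \DeltaI \) rules out the degenerate clause), plus the observation that integrality of \( |\conceptA^{\interI}| \) yields the evenness of \( |\DeltaI| \).
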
 
\begin{fact}\label{fact:disj}
For any different \kl{concept names} \( \conceptA, \conceptB \) we have that \( \interI \mymodels \pconditional{\conceptA}{\conceptB}{0}{0} \) iff \( \conceptA^{\interI} \cap \conceptB^{\interI} = \emptyset \).
\end{fact}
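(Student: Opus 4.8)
The plan is to derive the equivalence simply by unfolding the definition of when an \kl{interpretation} \kl{satisfies} a \kl{probabilistic conditional}. Writing it out, \( \interI \mymodels \pconditional{\conceptA}{\conceptB}{0}{0} \) holds iff either \( \conceptB^{\interI} = \emptyset \), or \( 0 \leq \frac{|(\conceptA \dland \conceptB)^{\interI}|}{|\conceptB^{\interI}|} \leq 0 \). Because \kl{interpretations} are finite, the numerator of that fraction is a nonnegative integer, so the second disjunct is equivalent to \( |(\conceptA \dland \conceptB)^{\interI}| = 0 \), i.e.\ to \( (\conceptA \dland \conceptB)^{\interI} = \emptyset \); and by the \kl{semantics} of conjunction (\cref{tab:EL}) we have \( (\conceptA \dland \conceptB)^{\interI} = \conceptA^{\interI} \cap \conceptB^{\interI} \).

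First I would verify the left-to-right direction by a case split on the disjunction above. If \( \conceptB^{\interI} = \emptyset \) then \( \conceptA^{\interI} \cap \conceptB^{\interI} = \emptyset \) trivially, and if instead \( (\conceptA \dland \conceptB)^{\interI} = \emptyset \) then this is already the desired \( \conceptA^{\interI} \cap \conceptB^{\interI} = \emptyset \). For the converse, assume \( \conceptA^{\interI} \cap \conceptB^{\interI} = \emptyset \), so \( |(\conceptA \dland \conceptB)^{\interI}| = 0 \). If \( \conceptB^{\interI} = \emptyset \) the \kl{probabilistic conditional} holds by the first disjunct of the definition; otherwise \( \frac{0}{|\conceptB^{\interI}|} = 0 \), which lies in the interval \( [0,0] \), so it holds by the second disjunct.

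I do not expect any obstacle: the whole argument is one routine unfolding. The only place that deserves a second of care is the special clause ``\( \conceptD^{\interI} = \emptyset \)'' built into the definition of \kl{satisfaction}, but it is harmless, since the empty set is disjoint from every set, so that edge case is consistent with the claimed equivalence rather than an exception to it. Incidentally, the distinctness hypothesis \( \conceptA \neq \conceptB \) is not actually needed for the argument; it is stated only because the fact will be instantiated with the decorated names \( \conceptAplus, \conceptAminus, \conceptBplus, \conceptBminus \). The neighbouring \cref{fact:half} is dealt with in the same way: \( \pconditional{\conceptA}{\topconcept}{0.5}{0.5} \) unfolds to \( \frac{|\conceptA^{\interI}|}{|\DeltaI|} = \frac{1}{2} \) --- the clause \( \topconcept^{\interI} = \DeltaI = \emptyset \) can never fire because \kl{domains} are nonempty --- and this rational identity holds iff \( 2\,|\conceptA^{\interI}| = |\DeltaI| \), that is, iff \( |\DeltaI| \) is even and \( |\conceptA^{\interI}| = \frac{1}{2}|\DeltaI| \).
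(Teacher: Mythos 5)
Your proposal is correct and matches the paper's approach: the paper treats \cref{fact:disj} (together with \cref{fact:half}) as an immediate consequence of unfolding the definition of \kl{satisfaction} of a \kl{probabilistic conditional}, which is exactly the routine case analysis you carry out, including the harmless \( \conceptB^{\interI} = \emptyset \) edge case. Your side remark that the distinctness hypothesis \( \conceptA \neq \conceptB \) is not actually needed is also accurate.
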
 

Now we focus on the ``translating'' \kl{ontology} \newnotion{\( \ontologyOtr \)}.
\newcommand{\tr}{\mathfrak{tr}}
Let \( \tr \) be a translation function defined by \( \tr(\top) = \conceptRealplus \), \( \tr(\conceptA) = \conceptAplus \dland \conceptRealplus \) and \( \tr(\conceptnegA) = \conceptAminus \dland \conceptRealplus \) for all \kl{concept names} \( \conceptA \in \Clang \) as well as \( \tr(\conceptC \dland \conceptD ) = \tr(\conceptC) \dland \tr(\conceptD) \) and \( \tr(\exists{\roler}.\conceptC) = \conceptRealplus \dland \exists{\roler}.(\tr(\conceptC) \dland \conceptRealplus) \) for complex concepts.
The ontology \( \ontologyOtr \) is obtained by replacing each \kl{GCI} \( \conceptC \dlsubseteq \conceptD \) from \( \ontologyO \) with \( \tr(\conceptC) \dlsubseteq \tr(\conceptD) \).
Finally, we put \newnotion{\( \ontologyOred \)} \( \deff \ontologyOcorr \cup \ontologyOtr \).

Note that the \kl{size} of \( \ontologyOred \) is polynomial in \(|\ontologyOhard| \).
For more intuitions, consult the picture below.

\begin{figure}[H]
  \centering

      
    \tikzset {_h4veaj61h/.code = {\pgfsetadditionalshadetransform{ \pgftransformshift{\pgfpoint{0 bp } { 0 bp }  }  \pgftransformrotate{0 }  \pgftransformscale{2 }  }}}
    \pgfdeclarehorizontalshading{_48nml32ee}{150bp}{rgb(0bp)=(0.82,0.01,0.11);
    rgb(37.5bp)=(0.82,0.01,0.11);
    rgb(49.75bp)=(0.82,0.01,0.11);
    rgb(50.08928575686046bp)=(0.25,0.46,0.02);
    rgb(62.5bp)=(0.25,0.46,0.02);
    rgb(100bp)=(0.25,0.46,0.02)}

      
    \tikzset {_3h06o0nmh/.code = {\pgfsetadditionalshadetransform{ \pgftransformshift{\pgfpoint{0 bp } { 0 bp }  }  \pgftransformrotate{0 }  \pgftransformscale{2 }  }}}
    \pgfdeclarehorizontalshading{_z5yp32z2b}{150bp}{rgb(0bp)=(1,1,1);
    rgb(37.5bp)=(1,1,1);
    rgb(49.75bp)=(1,1,1);
    rgb(50.08928575686046bp)=(0.25,0.46,0.02);
    rgb(62.5bp)=(0.25,0.46,0.02);
    rgb(100bp)=(0.25,0.46,0.02)}

      
    \tikzset {_w58kl4lzb/.code = {\pgfsetadditionalshadetransform{ \pgftransformshift{\pgfpoint{0 bp } { 0 bp }  }  \pgftransformrotate{0 }  \pgftransformscale{2 }  }}}
    \pgfdeclarehorizontalshading{_rkcmni1lf}{150bp}{rgb(0bp)=(1,1,1);
    rgb(37.5bp)=(1,1,1);
    rgb(49.75bp)=(1,1,1);
    rgb(50.08928575686046bp)=(0.25,0.46,0.02);
    rgb(62.5bp)=(0.25,0.46,0.02);
    rgb(100bp)=(0.25,0.46,0.02)}

      
    \tikzset {_2gfllr34p/.code = {\pgfsetadditionalshadetransform{ \pgftransformshift{\pgfpoint{0 bp } { 0 bp }  }  \pgftransformrotate{0 }  \pgftransformscale{2 }  }}}
    \pgfdeclarehorizontalshading{_hctzht9te}{150bp}{rgb(0bp)=(0.82,0.01,0.11);
    rgb(37.5bp)=(0.82,0.01,0.11);
    rgb(49.75bp)=(0.82,0.01,0.11);
    rgb(50.08928575686046bp)=(0.25,0.46,0.02);
    rgb(62.5bp)=(0.25,0.46,0.02);
    rgb(100bp)=(0.25,0.46,0.02)}

      
    \tikzset {_u95ujj9wj/.code = {\pgfsetadditionalshadetransform{ \pgftransformshift{\pgfpoint{0 bp } { 0 bp }  }  \pgftransformrotate{0 }  \pgftransformscale{2 }  }}}
    \pgfdeclarehorizontalshading{_b520fryqd}{150bp}{rgb(0bp)=(0.82,0.01,0.11);
    rgb(37.5bp)=(0.82,0.01,0.11);
    rgb(49.75bp)=(0.82,0.01,0.11);
    rgb(50.17857419592994bp)=(1,1,1);
    rgb(62.5bp)=(1,1,1);
    rgb(100bp)=(1,1,1)}

      
    \tikzset {_o1n4azwiz/.code = {\pgfsetadditionalshadetransform{ \pgftransformshift{\pgfpoint{0 bp } { 0 bp }  }  \pgftransformrotate{0 }  \pgftransformscale{2 }  }}}
    \pgfdeclarehorizontalshading{_p25pouabs}{150bp}{rgb(0bp)=(1,1,1);
    rgb(37.5bp)=(1,1,1);
    rgb(50bp)=(0.95,0.95,0.95);
    rgb(50.338807829788756bp)=(1,1,1);
    rgb(62.5bp)=(0.96,0.96,0.96);
    rgb(100bp)=(0.96,0.96,0.96)}

      
    \tikzset {_guic4zdni/.code = {\pgfsetadditionalshadetransform{ \pgftransformshift{\pgfpoint{0 bp } { 0 bp }  }  \pgftransformrotate{0 }  \pgftransformscale{2 }  }}}
    \pgfdeclarehorizontalshading{_2txb3i5jy}{150bp}{rgb(0bp)=(0.82,0.01,0.11);
    rgb(37.5bp)=(0.82,0.01,0.11);
    rgb(49.75bp)=(0.82,0.01,0.11);
    rgb(50.17857419592994bp)=(1,1,1);
    rgb(62.5bp)=(1,1,1);
    rgb(100bp)=(1,1,1)}

      
    \tikzset {_e9nwg3c5s/.code = {\pgfsetadditionalshadetransform{ \pgftransformshift{\pgfpoint{0 bp } { 0 bp }  }  \pgftransformrotate{0 }  \pgftransformscale{2 }  }}}
    \pgfdeclarehorizontalshading{_jqlkeezhg}{150bp}{rgb(0bp)=(1,1,1);
    rgb(37.5bp)=(1,1,1);
    rgb(49.75bp)=(1,1,1);
    rgb(50.08928575686046bp)=(0.25,0.46,0.02);
    rgb(62.5bp)=(0.25,0.46,0.02);
    rgb(100bp)=(0.25,0.46,0.02)}

      
    \tikzset {_elxfkt69c/.code = {\pgfsetadditionalshadetransform{ \pgftransformshift{\pgfpoint{0 bp } { 0 bp }  }  \pgftransformrotate{0 }  \pgftransformscale{2 }  }}}
    \pgfdeclarehorizontalshading{_b9vinuebn}{150bp}{rgb(0bp)=(0.82,0.01,0.11);
    rgb(37.5bp)=(0.82,0.01,0.11);
    rgb(49.75bp)=(0.82,0.01,0.11);
    rgb(50.17857419592994bp)=(1,1,1);
    rgb(62.5bp)=(1,1,1);
    rgb(100bp)=(1,1,1)}
    \tikzset{every picture/.style={line width=0.75pt}} 

    \begin{tikzpicture}[x=0.75pt,y=0.75pt,yscale=-1,xscale=1]

    \draw   (3.86,83.25) .. controls (3.86,43.29) and (41.02,10.89) .. (86.86,10.89) .. controls (132.7,10.89) and (169.86,43.29) .. (169.86,83.25) .. controls (169.86,123.21) and (132.7,155.61) .. (86.86,155.61) .. controls (41.02,155.61) and (3.86,123.21) .. (3.86,83.25) -- cycle ;
    \path  [shading=_48nml32ee,_h4veaj61h] (43.68,118.91) .. controls (43.68,112.83) and (48.42,107.89) .. (54.27,107.89) .. controls (60.12,107.89) and (64.86,112.83) .. (64.86,118.91) .. controls (64.86,125) and (60.12,129.93) .. (54.27,129.93) .. controls (48.42,129.93) and (43.68,125) .. (43.68,118.91) -- cycle ; 
     \draw   (43.68,118.91) .. controls (43.68,112.83) and (48.42,107.89) .. (54.27,107.89) .. controls (60.12,107.89) and (64.86,112.83) .. (64.86,118.91) .. controls (64.86,125) and (60.12,129.93) .. (54.27,129.93) .. controls (48.42,129.93) and (43.68,125) .. (43.68,118.91) -- cycle ; 

    \draw  [draw opacity=0][fill={rgb, 255:red, 223; green, 111; blue, 111 },fill opacity=1 ] (170.77,44.74) -- (229.8,44.74) -- (229.8,31.01) -- (269.16,58.46) -- (229.8,85.9) -- (229.8,72.18) -- (170.77,72.18) -- (184.49,58.46) -- cycle ;
    \draw [color={rgb, 255:red, 65; green, 117; blue, 5 }  ,draw opacity=1 ]   (59.16,129.43) .. controls (61.43,129.41) and (62.69,130.53) .. (62.95,132.78) .. controls (63.66,135.11) and (65.17,135.9) .. (67.46,135.13) .. controls (69.49,134.12) and (71.05,134.64) .. (72.13,136.7) .. controls (73.54,138.71) and (75.17,139.02) .. (77.04,137.62) .. controls (78.75,136.09) and (80.39,136.16) .. (81.98,137.85) .. controls (83.85,139.4) and (85.56,139.23) .. (87.12,137.33) .. controls (88.2,135.39) and (89.82,134.96) .. (91.98,136.04) .. controls (94.15,136.98) and (95.64,136.33) .. (96.46,134.08) .. controls (97.03,131.81) and (98.49,130.88) .. (100.84,131.27) .. controls (103.12,131.57) and (104.3,130.55) .. (104.38,128.22) .. controls (104.28,125.91) and (105.41,124.66) .. (107.76,124.47) .. controls (110.19,124) and (111.15,122.66) .. (110.63,120.46) .. controls (110.19,117.96) and (111.08,116.41) .. (113.31,115.8) .. controls (115.56,114.98) and (116.29,113.41) .. (115.5,111.09) -- (116.52,108.52) -- (118.84,101.47) ;
    \draw [shift={(119.27,99.93)}, rotate = 464.99] [color={rgb, 255:red, 65; green, 117; blue, 5 }  ,draw opacity=1 ][line width=0.75]    (10.93,-3.29) .. controls (6.95,-1.4) and (3.31,-0.3) .. (0,0) .. controls (3.31,0.3) and (6.95,1.4) .. (10.93,3.29)   ;
    \path  [shading=_z5yp32z2b,_3h06o0nmh] (68.18,63.41) .. controls (68.18,57.33) and (72.92,52.39) .. (78.77,52.39) .. controls (84.62,52.39) and (89.36,57.33) .. (89.36,63.41) .. controls (89.36,69.5) and (84.62,74.43) .. (78.77,74.43) .. controls (72.92,74.43) and (68.18,69.5) .. (68.18,63.41) -- cycle ; 
     \draw   (68.18,63.41) .. controls (68.18,57.33) and (72.92,52.39) .. (78.77,52.39) .. controls (84.62,52.39) and (89.36,57.33) .. (89.36,63.41) .. controls (89.36,69.5) and (84.62,74.43) .. (78.77,74.43) .. controls (72.92,74.43) and (68.18,69.5) .. (68.18,63.41) -- cycle ; 

    \draw [color={rgb, 255:red, 0; green, 0; blue, 0 }  ,draw opacity=1 ]   (108.68,88.91) -- (65.62,111.95) ;
    \draw [shift={(63.86,112.89)}, rotate = 331.85] [color={rgb, 255:red, 0; green, 0; blue, 0 }  ,draw opacity=1 ][line width=0.75]    (10.93,-3.29) .. controls (6.95,-1.4) and (3.31,-0.3) .. (0,0) .. controls (3.31,0.3) and (6.95,1.4) .. (10.93,3.29)   ;
    \draw [color={rgb, 255:red, 0; green, 0; blue, 0 }  ,draw opacity=1 ]   (89.86,66.89) -- (112.59,78.96) ;
    \draw [shift={(114.36,79.89)}, rotate = 207.95] [color={rgb, 255:red, 0; green, 0; blue, 0 }  ,draw opacity=1 ][line width=0.75]    (10.93,-3.29) .. controls (6.95,-1.4) and (3.31,-0.3) .. (0,0) .. controls (3.31,0.3) and (6.95,1.4) .. (10.93,3.29)   ;
    \draw [color={rgb, 255:red, 65; green, 117; blue, 5 }  ,draw opacity=1 ]   (73.36,53.39) .. controls (70.87,52.6) and (69.91,51.15) .. (70.46,49.05) .. controls (70.65,46.58) and (69.57,45.27) .. (67.21,45.11) .. controls (65.01,45.28) and (63.82,44.2) .. (63.64,41.87) .. controls (62.91,39.41) and (61.4,38.57) .. (59.11,39.35) .. controls (57.39,40.6) and (55.77,40.36) .. (54.25,38.63) .. controls (52.29,37.3) and (50.77,37.77) .. (49.69,40.04) .. controls (49.39,42.29) and (48.15,43.34) .. (45.98,43.17) .. controls (43.51,43.76) and (42.61,45.19) .. (43.28,47.47) .. controls (44.29,49.48) and (43.79,51.11) .. (41.76,52.38) .. controls (39.92,53.87) and (39.86,55.54) .. (41.59,57.37) .. controls (43.52,58.54) and (44.02,60.13) .. (43.07,62.14) .. controls (42.83,64.6) and (43.97,65.78) .. (46.49,65.69) .. controls (48.46,64.78) and (50.03,65.31) .. (51.21,67.28) .. controls (52.95,69.03) and (54.6,69.01) .. (56.16,67.22) -- (59.34,66.62) -- (66.78,64.04) ;
    \draw [shift={(68.18,63.41)}, rotate = 515.0799999999999] [color={rgb, 255:red, 65; green, 117; blue, 5 }  ,draw opacity=1 ][line width=0.75]    (10.93,-3.29) .. controls (6.95,-1.4) and (3.31,-0.3) .. (0,0) .. controls (3.31,0.3) and (6.95,1.4) .. (10.93,3.29)   ;
    \draw   (272.39,45.53) .. controls (272.39,27.77) and (286.78,13.37) .. (304.55,13.37) -- (598.1,13.37) .. controls (615.86,13.37) and (630.26,27.77) .. (630.26,45.53) -- (630.26,142.01) .. controls (630.26,159.77) and (615.86,174.17) .. (598.1,174.17) -- (304.55,174.17) .. controls (286.78,174.17) and (272.39,159.77) .. (272.39,142.01) -- cycle ;
    \draw   (275.26,93.15) .. controls (275.26,53.19) and (322.76,20.79) .. (381.36,20.79) .. controls (439.95,20.79) and (487.46,53.19) .. (487.46,93.15) .. controls (487.46,133.11) and (439.95,165.51) .. (381.36,165.51) .. controls (322.76,165.51) and (275.26,133.11) .. (275.26,93.15) -- cycle ;
    \draw [color={rgb, 255:red, 65; green, 117; blue, 5 }  ,draw opacity=1 ]   (330.16,138.93) .. controls (332.43,138.91) and (333.69,140.03) .. (333.95,142.28) .. controls (334.66,144.61) and (336.17,145.4) .. (338.46,144.63) .. controls (340.49,143.62) and (342.05,144.14) .. (343.13,146.2) .. controls (344.54,148.21) and (346.17,148.52) .. (348.04,147.12) .. controls (349.75,145.59) and (351.39,145.66) .. (352.98,147.35) .. controls (354.85,148.9) and (356.56,148.73) .. (358.12,146.83) .. controls (359.2,144.89) and (360.82,144.46) .. (362.98,145.54) .. controls (365.15,146.48) and (366.64,145.83) .. (367.46,143.58) .. controls (368.03,141.31) and (369.49,140.38) .. (371.84,140.77) .. controls (374.12,141.07) and (375.3,140.05) .. (375.38,137.72) .. controls (375.28,135.41) and (376.41,134.16) .. (378.76,133.97) .. controls (381.19,133.5) and (382.15,132.16) .. (381.63,129.96) .. controls (381.19,127.46) and (382.08,125.91) .. (384.31,125.3) .. controls (386.56,124.48) and (387.29,122.91) .. (386.5,120.59) -- (387.52,118.02) -- (389.84,110.97) ;
    \draw [shift={(390.27,109.43)}, rotate = 464.99] [color={rgb, 255:red, 65; green, 117; blue, 5 }  ,draw opacity=1 ][line width=0.75]    (10.93,-3.29) .. controls (6.95,-1.4) and (3.31,-0.3) .. (0,0) .. controls (3.31,0.3) and (6.95,1.4) .. (10.93,3.29)   ;
    \draw [color={rgb, 255:red, 0; green, 0; blue, 0 }  ,draw opacity=1 ]   (379.68,98.41) -- (336.62,121.45) ;
    \draw [shift={(334.86,122.39)}, rotate = 331.85] [color={rgb, 255:red, 0; green, 0; blue, 0 }  ,draw opacity=1 ][line width=0.75]    (10.93,-3.29) .. controls (6.95,-1.4) and (3.31,-0.3) .. (0,0) .. controls (3.31,0.3) and (6.95,1.4) .. (10.93,3.29)   ;
    \draw [color={rgb, 255:red, 0; green, 0; blue, 0 }  ,draw opacity=1 ]   (360.86,76.39) -- (383.59,88.46) ;
    \draw [shift={(385.36,89.39)}, rotate = 207.95] [color={rgb, 255:red, 0; green, 0; blue, 0 }  ,draw opacity=1 ][line width=0.75]    (10.93,-3.29) .. controls (6.95,-1.4) and (3.31,-0.3) .. (0,0) .. controls (3.31,0.3) and (6.95,1.4) .. (10.93,3.29)   ;
    \draw [color={rgb, 255:red, 65; green, 117; blue, 5 }  ,draw opacity=1 ]   (344.36,62.89) .. controls (341.87,62.1) and (340.91,60.65) .. (341.46,58.55) .. controls (341.65,56.08) and (340.57,54.77) .. (338.21,54.61) .. controls (336.01,54.78) and (334.82,53.7) .. (334.64,51.37) .. controls (333.91,48.91) and (332.4,48.07) .. (330.11,48.85) .. controls (328.39,50.1) and (326.77,49.86) .. (325.25,48.13) .. controls (323.29,46.8) and (321.77,47.27) .. (320.69,49.54) .. controls (320.39,51.79) and (319.15,52.84) .. (316.98,52.67) .. controls (314.51,53.26) and (313.61,54.69) .. (314.28,56.97) .. controls (315.29,58.98) and (314.79,60.61) .. (312.76,61.88) .. controls (310.92,63.37) and (310.86,65.04) .. (312.59,66.87) .. controls (314.52,68.04) and (315.02,69.63) .. (314.07,71.64) .. controls (313.83,74.1) and (314.97,75.28) .. (317.49,75.19) .. controls (319.46,74.28) and (321.03,74.81) .. (322.21,76.78) .. controls (323.95,78.53) and (325.6,78.51) .. (327.16,76.72) -- (330.34,76.12) -- (337.78,73.54) ;
    \draw [shift={(339.18,72.91)}, rotate = 515.0799999999999] [color={rgb, 255:red, 65; green, 117; blue, 5 }  ,draw opacity=1 ][line width=0.75]    (10.93,-3.29) .. controls (6.95,-1.4) and (3.31,-0.3) .. (0,0) .. controls (3.31,0.3) and (6.95,1.4) .. (10.93,3.29)   ;
    \path  [shading=_rkcmni1lf,_w58kl4lzb] (339.18,72.91) .. controls (339.18,66.83) and (343.92,61.89) .. (349.77,61.89) .. controls (355.62,61.89) and (360.36,66.83) .. (360.36,72.91) .. controls (360.36,79) and (355.62,83.93) .. (349.77,83.93) .. controls (343.92,83.93) and (339.18,79) .. (339.18,72.91) -- cycle ; 
     \draw   (339.18,72.91) .. controls (339.18,66.83) and (343.92,61.89) .. (349.77,61.89) .. controls (355.62,61.89) and (360.36,66.83) .. (360.36,72.91) .. controls (360.36,79) and (355.62,83.93) .. (349.77,83.93) .. controls (343.92,83.93) and (339.18,79) .. (339.18,72.91) -- cycle ; 

    \path  [shading=_hctzht9te,_2gfllr34p] (319.57,127.91) .. controls (319.57,121.83) and (324.31,116.89) .. (330.16,116.89) .. controls (336.01,116.89) and (340.75,121.83) .. (340.75,127.91) .. controls (340.75,134) and (336.01,138.93) .. (330.16,138.93) .. controls (324.31,138.93) and (319.57,134) .. (319.57,127.91) -- cycle ; 
     \draw   (319.57,127.91) .. controls (319.57,121.83) and (324.31,116.89) .. (330.16,116.89) .. controls (336.01,116.89) and (340.75,121.83) .. (340.75,127.91) .. controls (340.75,134) and (336.01,138.93) .. (330.16,138.93) .. controls (324.31,138.93) and (319.57,134) .. (319.57,127.91) -- cycle ; 

    \path  [shading=_b520fryqd,_u95ujj9wj] (506.28,49.91) .. controls (506.28,43.83) and (511.02,38.89) .. (516.87,38.89) .. controls (522.72,38.89) and (527.46,43.83) .. (527.46,49.91) .. controls (527.46,56) and (522.72,60.93) .. (516.87,60.93) .. controls (511.02,60.93) and (506.28,56) .. (506.28,49.91) -- cycle ; 
     \draw   (506.28,49.91) .. controls (506.28,43.83) and (511.02,38.89) .. (516.87,38.89) .. controls (522.72,38.89) and (527.46,43.83) .. (527.46,49.91) .. controls (527.46,56) and (522.72,60.93) .. (516.87,60.93) .. controls (511.02,60.93) and (506.28,56) .. (506.28,49.91) -- cycle ; 

    \path  [shading=_p25pouabs,_o1n4azwiz] (504.77,139.91) .. controls (504.77,133.83) and (509.51,128.89) .. (515.36,128.89) .. controls (521.21,128.89) and (525.95,133.83) .. (525.95,139.91) .. controls (525.95,146) and (521.21,150.93) .. (515.36,150.93) .. controls (509.51,150.93) and (504.77,146) .. (504.77,139.91) -- cycle ; 
     \draw   (504.77,139.91) .. controls (504.77,133.83) and (509.51,128.89) .. (515.36,128.89) .. controls (521.21,128.89) and (525.95,133.83) .. (525.95,139.91) .. controls (525.95,146) and (521.21,150.93) .. (515.36,150.93) .. controls (509.51,150.93) and (504.77,146) .. (504.77,139.91) -- cycle ; 

    \path  [shading=_2txb3i5jy,_guic4zdni] (108.68,88.91) .. controls (108.68,82.83) and (113.42,77.89) .. (119.27,77.89) .. controls (125.12,77.89) and (129.86,82.83) .. (129.86,88.91) .. controls (129.86,95) and (125.12,99.93) .. (119.27,99.93) .. controls (113.42,99.93) and (108.68,95) .. (108.68,88.91) -- cycle ; 
     \draw   (108.68,88.91) .. controls (108.68,82.83) and (113.42,77.89) .. (119.27,77.89) .. controls (125.12,77.89) and (129.86,82.83) .. (129.86,88.91) .. controls (129.86,95) and (125.12,99.93) .. (119.27,99.93) .. controls (113.42,99.93) and (108.68,95) .. (108.68,88.91) -- cycle ; 

    \path  [shading=_jqlkeezhg,_e9nwg3c5s] (506.38,95.71) .. controls (506.38,89.63) and (511.12,84.69) .. (516.97,84.69) .. controls (522.82,84.69) and (527.56,89.63) .. (527.56,95.71) .. controls (527.56,101.8) and (522.82,106.73) .. (516.97,106.73) .. controls (511.12,106.73) and (506.38,101.8) .. (506.38,95.71) -- cycle ; 
     \draw   (506.38,95.71) .. controls (506.38,89.63) and (511.12,84.69) .. (516.97,84.69) .. controls (522.82,84.69) and (527.56,89.63) .. (527.56,95.71) .. controls (527.56,101.8) and (522.82,106.73) .. (516.97,106.73) .. controls (511.12,106.73) and (506.38,101.8) .. (506.38,95.71) -- cycle ; 

    \path  [shading=_b9vinuebn,_elxfkt69c] (379.68,98.41) .. controls (379.68,92.33) and (384.42,87.39) .. (390.27,87.39) .. controls (396.12,87.39) and (400.86,92.33) .. (400.86,98.41) .. controls (400.86,104.5) and (396.12,109.43) .. (390.27,109.43) .. controls (384.42,109.43) and (379.68,104.5) .. (379.68,98.41) -- cycle ; 
     \draw   (379.68,98.41) .. controls (379.68,92.33) and (384.42,87.39) .. (390.27,87.39) .. controls (396.12,87.39) and (400.86,92.33) .. (400.86,98.41) .. controls (400.86,104.5) and (396.12,109.43) .. (390.27,109.43) .. controls (384.42,109.43) and (379.68,104.5) .. (379.68,98.41) -- cycle ; 

    \draw  [draw opacity=0][fill={rgb, 255:red, 223; green, 111; blue, 111 }, fill opacity=1 ] (262.16,139.18) -- (203.13,139.18) -- (203.13,152.9) -- (163.77,125.46) -- (203.13,98.01) -- (203.13,111.74) -- (262.16,111.74) -- (248.44,125.46) -- cycle ;

        \draw (194.47,78.47) node [anchor=north west][inner sep=0.75pt]  [font=\scriptsize,color={rgb, 255:red, 255; green, 255; blue, 255 }  ,opacity=1 ] [align=left] {reduction};
        \draw (113.9,58) node [anchor=north west][inner sep=0.75pt]    {$\domelemd_2{:} \; \conceptA, \conceptnegB$};
        \draw (22,89) node [anchor=north west][inner sep=0.75pt]    {$\domelemd_3{:} \; \conceptA, \conceptB$};
        \draw (74,32) node [anchor=north west][inner sep=0.75pt]    {$\domelemd_1{:} \; \conceptnegA, \conceptB$};
        \draw (530.01,39) node [anchor=north west][inner sep=0.75pt]    {$\domelemd_1'{:} \; \conceptAplus, \conceptBminus$};
        \draw (531.73,86) node [anchor=north west][inner sep=0.75pt]    {$\domelemd_2'{:} \; \conceptAminus, \conceptBplus$};
        \draw (531.66,131) node [anchor=north west][inner sep=0.75pt]    {$\domelemd_3'{:} \; \conceptAminus, \conceptBminus$};
        \draw (285,99) node [anchor=north west][inner sep=0.75pt]    {$\domelemd_3{:} \; \conceptAplus, \conceptBplus$};
        \draw (402.9,91) node [anchor=north west][inner sep=0.75pt]    {$\domelemd_2{:} \; \conceptAplus, \conceptBminus$};
        \draw (362,59) node [anchor=north west][inner sep=0.75pt]    {$\domelemd_1{:} \; \conceptAminus, \conceptBplus$};
        \draw (365,30) node [anchor=north west][inner sep=0.75pt]    {$\conceptRealplus$};
        \draw (455,20) node [anchor=north west][inner sep=0.75pt]    {$\conceptRealminus$};

        \draw (215,152) node [anchor=north west][inner sep=0.75pt]    {$\interJ \mymodels \ontologyOred$};
        \draw (145,12) node [anchor=north west][inner sep=0.75pt]    {$\interI \mymodels \ontologyOhard$};

        \draw (190,53.47) node [anchor=north west][inner sep=0.75pt]  [font=\scriptsize,color={rgb, 255:red, 255; green, 255; blue, 255 }  ,opacity=1 ] [align=left] {\cref{lemma:from_el_to_sel}};
        \draw (190,119.47) node [anchor=north west][inner sep=0.75pt]  [font=\scriptsize,color={rgb, 255:red, 255; green, 255; blue, 255 }  ,opacity=1 ] [align=left] {\cref{lemma:from_sel_to_el}};

    \end{tikzpicture}
\end{figure}

\subsection{Correctness of the reduction}
Let us start with an auxiliary notion of interpretations that are \newnotion{good-for-encoding}.
We say that $\interJ$ is \kl{good-for-encoding} if for all \kl{concept} names $\conceptA \in \ohardconcepts$ it satisfies $\conceptA^{\interJ} = \conceptAplus^{\interJ}$ and $\conceptAminus^{\interJ} = \DeltaJ \setminus \conceptA^{\interJ}$.
The following lemma relates the translation function $\tr$, \kl{good-for-encoding} interpretations and their submodels.

\begin{lemma}[Agreement lemma]\label{lemma:agreement}
Let $\interJ$ be \kl{good-for-encoding} and let $\interI$ be its induced subinterpretation with domain $\conceptRealplus^{\interJ}$.
Then all $\ELneg$ \kl{concepts} $\conceptC$ employing \emph{only} concept names from $\ohardconcepts$ satisfy $\conceptC^{\interI} = \tr(\conceptC)^{\interJ}$.
Moreover for such concepts $\conceptC, \conceptD$ we have: $\interJ \models \tr(\conceptC) \dlsubseteq \tr(\conceptD)$ iff $\interI \models \conceptC \dlsubseteq \conceptD$. 
\end{lemma}
\begin{proof}
We proceed by inductively on the shape of \kl{concepts} $\conceptC$.
The cases for $\conceptC = \top, \conceptA$ or $\conceptnegA$ for $\conceptA \in \Clang$ follow immediately from the definition of $\tr$ and the assumptions $\conceptA^{\interJ} = \conceptAplus^{\interJ}$ and $\conceptAminus^{\interJ} = \DeltaJ \setminus \conceptA^{\interJ}$. 
The case of $\conceptC = \conceptD \dland \conceptE$ follows from the fact that $\tr$ is homomorphic for $\dland$.
Hence, the only interesting case is when $\conceptC = \exists{\roler}.\conceptD$. 
Assuming $\conceptD^{\interI} = \tr(\conceptD)^{\interJ}$ we will show two inclusions. 

\begin{itemize}
  \item For the first inclusion, take $\domelemd \in (\exists{\roler}.\conceptD)^{\interI}$. Thus $\domelemd \in \DeltaI (= \conceptRealplus^{\interJ})$ and there exists an $\domeleme \in \DeltaI$ satisfying both $(\domelemd, \domeleme) \in \roler^{\interI}$ and $\domeleme \in \conceptD^{\interI}$. 
  Note that $\domeleme \in \DeltaI$ implies $\domeleme \in \conceptRealplus^{\interJ}$.
  Moreover, by the equality $\conceptD^{\interI} = \tr(\conceptD)^{\interJ}$ we have $\domeleme \in (\conceptRealplus \dland \tr(\conceptD))^{\interJ}$.
  Since $\roler^{\interI} \subseteq \roler^{\interJ}$ we infer $\domelemd \in (\exists{\roler}.(\conceptRealplus \dland \tr(\conceptD)))^{\interJ}$, but because $\domelemd \in \conceptRealplus^{\interJ}$ we can conclude that $\domelemd \in (\conceptRealplus \dland \exists{\roler}.(\conceptRealplus \dland \tr(\conceptD)))^{\interJ} = \tr(\exists{\roler}.\conceptD)^{\interJ}$.
  \item For the opposite direction take $\domelemd \in \tr(\exists{\roler}.\conceptD)^{\interJ} = (\conceptRealplus \dland \exists{\roler}.(\conceptRealplus \dland \tr(\conceptD)))^{\interJ}$.
  It implies that $\domelemd \in \conceptRealplus^{\interJ} (= \DeltaI)$ as well as that there is an $\domeleme \in \conceptRealplus^{\interJ} (= \DeltaI)$ witnessing $(\domelemd, \domeleme) \in \roler^{\interJ}$ and $\domeleme \in \tr(\conceptD)^{\interJ} (= \conceptD^{\interI})$. Since both $\domelemd, \domeleme$ belong to $\DeltaI$ we infer that $(\domelemd, \domeleme) \in \roler^{\interI}$ and hence $\domelemd \in (\exists{\roler}.\conceptD)^{\interI}$.
\end{itemize}

For the last statement of the lemma: to show that $\interJ \models \tr(\conceptC) \dlsubseteq \tr(\conceptD)$ iff $\interI \models \conceptC \dlsubseteq \conceptD$ hold, it suffices to invoke $\conceptC^{\interI} = \tr(\conceptC)^{\interJ}$ and $\conceptD^{\interI} = \tr(\conceptD)^{\interJ}$ to see that the inclusions $\conceptC^{\interI} \subseteq \conceptD^{\interI}$ and $\tr(\conceptC)^{\interJ} \subseteq \tr(\conceptD)^{\interJ}$ are equivalent.
\end{proof}

The agreement lemma can now be used to show that the \kl{consistency} of \( \ontologyOred \) implies the  \kl{consistency} of \( \ontologyOhard \).

\begin{lemma}\label{lemma:from_sel_to_el}
If \( \ontologyOred \) is \kl{consistent} then so is \( \ontologyOhard \).
\end{lemma}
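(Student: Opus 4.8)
The plan is to reverse the construction of \cref{lemma:from_el_to_sel}: starting from a \kl{model} \( \interJ \mymodels \ontologyOred \), carve out the part of \( \interJ \) living inside \( \conceptRealplus \) and read off a \kl{model} of \( \ontologyOhard \) from it. The first things I would record are two consequences of \( \interJ \mymodels \ontologyOcorr \). By \cref{fact:half} applied to \( \conceptReal \), the set \( \conceptRealplus^{\interJ} \) has size \( \tfrac12 |\DeltaJ| \ge 1 \), so it is non-empty and can serve as a \kl{domain}. More importantly, for every name \( \conceptA \in \{\conceptReal\} \cup \ohardconcepts \) the three conditionals \( \pconditional{\conceptAplus}{\topconcept}{0.5}{0.5} \), \( \pconditional{\conceptAminus}{\topconcept}{0.5}{0.5} \), \( \pconditional{\conceptAplus}{\conceptAminus}{0}{0} \) force, via \cref{fact:half,fact:disj}, that \( \conceptAplus^{\interJ} \) and \( \conceptAminus^{\interJ} \) are disjoint and each of cardinality \( \tfrac12|\DeltaJ| \); a one-line cardinality count then gives that they actually \emph{partition} \( \DeltaJ \). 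This partition property is the one place where the ``half \(+\) half \(+\) disjoint'' trick earns its keep, since it is exactly what lets a logic without negation simulate the complement.

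Next I would define the candidate \kl{model} \( \interI \) of \( \ontologyOhard \) by restriction: set \( \DeltaI \deff \conceptRealplus^{\interJ} \); for each \kl{concept name} \( \conceptA \in \ohardconcepts \) put \( \conceptA^{\interI} \deff \conceptAplus^{\interJ} \cap \DeltaI \) (and, say, \( \conceptB^{\interI} \deff \emptyset \) for the finitely many names \( \conceptB \notin \ohardconcepts \), which do not occur in \( \ontologyOhard \) anyway); and for each \kl{role name} \( \roler \) put \( \roler^{\interI} \deff \roler^{\interJ} \cap (\DeltaI \times \DeltaI) \), which is needed so that \( \interI \) is a legal \kl{interpretation}.

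The core of the argument is then the mirror image of the case analysis in \cref{lemma:from_el_to_sel}: for every \( \domelemd \in \DeltaI \), the element \( \domelemd \) belongs to a \kl{concept} \( \conceptC' \) of \( \ontologyOhard \) in \( \interI \) iff \( \domelemd \in \tr(\conceptC')^{\interJ} \). For \( \topconcept \) this is \( \DeltaI = \conceptRealplus^{\interJ} \); for a name \( \conceptA \) it is the definition of \( \conceptA^{\interI} \) together with \( \domelemd \in \DeltaI = \conceptRealplus^{\interJ} \); for \( \conceptnegA \) it is where the partition property is used, namely \( \domelemd \notin \conceptAplus^{\interJ} \) iff \( \domelemd \in \conceptAminus^{\interJ} \) for \( \domelemd \in \DeltaJ \), so that \( \domelemd \in (\conceptnegA)^{\interI} \) iff \( \domelemd \in \conceptAminus^{\interJ} \cap \DeltaI = \tr(\conceptnegA)^{\interJ} \); conjunctions reduce to the previous cases; and for \( \exists{\roler}.\conceptAqm \) one uses that \( \tr(\conceptAqm)^{\interJ} \subseteq \conceptRealplus^{\interJ} = \DeltaI \), so any \( \roler \)-witness in \( \interJ \) automatically lies in \( \DeltaI \) and hence the \( \roler \)-edge survives the restriction, while conversely every \( \roler^{\interI} \)-edge is a \( \roler^{\interJ} \)-edge. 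Finally, since \( \ontologyOhard \) is in \kl{normal form} (\cref{prop:normal-form}), each of its \kl{GCIs} \( \conceptC' \dlsubseteq \conceptD' \) is translated to \( \tr(\conceptC') \dlsubseteq \tr(\conceptD') \) in \( \ontologyOtr \); for \( \domelemd \in (\conceptC')^{\interI} \) the correspondence gives \( \domelemd \in \tr(\conceptC')^{\interJ} \), then \( \interJ \mymodels \ontologyOtr \) gives \( \domelemd \in \tr(\conceptD')^{\interJ} \), and since \( \tr(\conceptD')^{\interJ} \subseteq \DeltaI \) the correspondence gives \( \domelemd \in (\conceptD')^{\interI} \). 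Hence \( \interI \mymodels \ontologyOhard \), so \( \ontologyOhard \) is \kl{consistent}.

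The only genuinely non-routine step is the observation that \( \conceptAplus^{\interJ} \) and \( \conceptAminus^{\interJ} \) partition \( \DeltaJ \) (immediate once \cref{fact:half,fact:disj} are available, but conceptually the crux), together with the minor bookkeeping that existential witnesses and role edges stay inside \( \conceptRealplus^{\interJ} \); everything else is symmetric to the proof of \cref{lemma:from_el_to_sel}.
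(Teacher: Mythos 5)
Your proposal is correct and follows essentially the same route as the paper's proof: restrict the domain to \( \conceptRealplus^{\interJ} \), interpret each \( \conceptA \in \ohardconcepts \) as \( \conceptAplus^{\interJ} \cap \conceptRealplus^{\interJ} \), restrict the roles, and use \cref{fact:half,fact:disj} to get that \( \conceptAplus^{\interJ} \) and \( \conceptAminus^{\interJ} \) partition \( \DeltaJ \), which handles the negated-atom case in the same case analysis. Your added remarks (non-emptiness of \( \conceptRealplus^{\interJ} \), and that existential witnesses lie in \( \conceptRealplus^{\interJ} \) so role edges survive the restriction rather than literally \( \roler^{\interI} = \roler^{\interJ} \)) are only minor points of extra care, not a different argument.
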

\begin{proof}
  Let \( \interJ \) be a model of \( \ontologyOred \) with $\conceptA^{\interJ} \deff \conceptAplus^{\interJ}$ (since $\conceptA$ does not appear in $\ontologyOred$ this can be assumed w.l.o.g.) for all concept names $\conceptA$ from $\ohardconcepts$. 
  By the satisfaction of $\ontologyOcorr$ we know that $\conceptAplus^{\interJ}$ and $\conceptAminus^{\interJ}$ are disjoint and thus $\interJ$ is \kl{good-for-encoding}.
  Hence, take $\interI$ to be its induced subinterpretation with domain $\conceptRealplus^{\interJ}$.
  By applying~\cref{lemma:agreement} we know that for each \kl{GCI} $\conceptC \dlsubseteq \conceptD$ from $\ontologyOhard$ the satisfaction of $\interJ \models \tr(\conceptC) \dlsubseteq \tr(\conceptD)$ implies $\interI \models \conceptC \dlsubseteq \conceptD$.
  Thus we get~$\interI \models \ontologyOhard$, which implies that $\ontologyOhard$ is \kl{consistent}.
\end{proof}

We next show that the \kl{consistency} of \( \ontologyOhard \) implies the \kl{consistency} of \( \ontologyOred \). 
In the proof we basically take a \kl{model} of \( \ontologyOhard \), duplicate each \kl{domain} element and define the memberships of fresh \kl{concepts} introduced by \( \ontologyOred \).
Such \kl{concepts} are defined in such a way that if an element from a \kl{model} \( \interI \) of \( \ontologyOhard \) is a member of \( \conceptA^{\interI} \) then the corresponding element in a constructed \kl{model} \( \interJ \) for~\( \ontologyOred \) is a member of \( \conceptAplus^{\interJ} \) while its copy belongs to \( \conceptAminus^{\interJ} \).
In this way, the total number of elements in every \kl{concept} is always equal to half of the \kl{domain}.

\begin{lemma}\label{lemma:from_el_to_sel}
If \( \ontologyOhard \) is \kl{consistent} then so is \( \ontologyOred \).
\end{lemma}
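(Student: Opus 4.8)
The plan is to take a \kl{model} $\interI$ of $\ontologyOhard$ (which is finite, as are all \kl{interpretations} in this paper) and turn it into a \kl{model} $\interJ$ of $\ontologyOred$ by \emph{doubling} the \kl{domain}: set $\DeltaJ \deff \DeltaI \times \{+,-\}$, so that $|\DeltaJ| = 2|\DeltaI|$ is automatically even. Each $\domelemd \in \DeltaI$ then has a positive copy $(\domelemd,+)$ and a negative copy $(\domelemd,-)$, and the positive copies will carry a faithful image of $\interI$. Concretely, I would put $\conceptRealplus^{\interJ} \deff \DeltaI \times \{+\}$ and $\conceptRealminus^{\interJ} \deff \DeltaI \times \{-\}$; for each $\conceptA \in \ohardconcepts$ I would \emph{cross over} the membership, setting $\conceptAplus^{\interJ} \deff (\conceptA^{\interI} \times \{+\}) \cup ((\DeltaI \setminus \conceptA^{\interI}) \times \{-\})$ and $\conceptAminus^{\interJ} \deff ((\DeltaI \setminus \conceptA^{\interI}) \times \{+\}) \cup (\conceptA^{\interI} \times \{-\})$; and I would copy the \kl{roles} only between positive elements, $\roler^{\interJ} \deff \{ ((\domelemd,+),(\domeleme,+)) \mid (\domelemd,\domeleme) \in \roler^{\interI} \}$, leaving every negative copy role-isolated. (The \kl{concept names} of $\ontologyOhard$ themselves may be interpreted as $\emptyset$, since $\ontologyOtr$ mentions none of them.)

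First I would check $\interJ \mymodels \ontologyOcorr$. By construction each of $\conceptRealplus^{\interJ}$, $\conceptRealminus^{\interJ}$ and, for every $\conceptA \in \ohardconcepts$, each of $\conceptAplus^{\interJ}$, $\conceptAminus^{\interJ}$ has exactly $|\DeltaI| = \tfrac{1}{2}|\DeltaJ|$ elements --- for the decorated $\conceptA$-\kl{concepts} the ``$+$''-part and the ``$-$''-part contribute $|\conceptA^{\interI}|$ and $|\DeltaI \setminus \conceptA^{\interI}|$, which sum to $|\DeltaI|$ --- while $\conceptAplus^{\interJ} \cap \conceptAminus^{\interJ} = \emptyset$ (the two disagree on every fibre $\{\domelemd\} \times \{+,-\}$) and $\conceptRealplus^{\interJ} \cap \conceptRealminus^{\interJ} = \emptyset$. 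Invoking \cref{fact:half} and \cref{fact:disj} then yields $\interJ \mymodels \ontologyOcorr$ immediately.

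Next I would establish the translation-correctness claim: for every $\domelemd \in \DeltaI$ and every $\conceptA \in \ohardconcepts$ one has $\domelemd \in \conceptA^{\interI}$ iff $(\domelemd,+) \in \tr(\conceptA)^{\interJ}$, and $\domelemd \in \conceptnegA^{\interI}$ iff $(\domelemd,+) \in \tr(\conceptnegA)^{\interJ}$, and $(\domelemd,+) \in \tr(\topconcept)^{\interJ}$ always; this is immediate from the definitions of $\tr$ and of $\interJ$, using that every positive copy lies in $\conceptRealplus^{\interJ}$. Since $\ontologyOhard$ is in \kl{normal form}, its \kl{GCIs} have one of the shapes (i)--(iv), and $\tr$ commutes with $\dland$ and with $\exists\roler.(\cdot)$, so I would verify $\interJ \mymodels \ontologyOtr$ case by case. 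For (i) and (ii), every member of a translated left-hand side is a positive copy $(\domelemd,+)$, so the claim reduces the inclusion to the corresponding inclusion holding in $\interI$. For (iii), a witness $\domeleme$ in $\interI$ yields the witness $(\domeleme,+)$ in $\interJ$, since $\roler^{\interJ}$ mirrors $\roler^{\interI}$ on positive copies. For (iv), every element with an outgoing $\roler^{\interJ}$-edge is a positive copy whose $\roler^{\interJ}$-successor is again a positive copy, so the constraint is once more pulled back to $\interI$. Combining this with the previous step gives $\interJ \mymodels \ontologyOred$, hence $\ontologyOred$ is \kl{consistent}.

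The only genuinely delicate point --- the ``hard part'' --- is making all of $\ontologyOcorr$'s ``exactly half the \kl{domain}'' constraints hold simultaneously even though $\conceptA^{\interI}$ need not be balanced inside $\interI$: the domain-doubling together with the cross-over definition of $\conceptAplus^{\interJ}/\conceptAminus^{\interJ}$ is precisely what reconciles ``copy $\interI$ faithfully on the positive part'' with ``give every decorated \kl{concept} exactly half of $\DeltaJ$''. A secondary point worth attention is that a translated \kl{GCI} of shape (iv), $\exists\roler.\tr(\conceptBqm) \dlsubseteq \tr(\conceptAqm)$, forces every element with a matching outgoing \kl{role} edge into $\tr(\conceptAqm)^{\interJ} \subseteq \conceptRealplus^{\interJ}$; were a negative copy to have such an edge this would be contradictory, so one has to route the \kl{roles} so that negative copies have no outgoing edges, which is exactly what isolating them achieves.
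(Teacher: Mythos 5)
Your proposal is correct and matches the paper's own construction essentially verbatim: the paper likewise doubles the domain (original elements playing the role of your positive copies, primed copies your negative ones), uses the same cross-over definition of \( \conceptAplus^{\interJ}, \conceptAminus^{\interJ} \), keeps roles only among the \( \conceptRealplus \)-part, and then verifies \( \ontologyOcorr \) via \cref{fact:half} and \cref{fact:disj} and \( \ontologyOtr \) by the same normal-form case analysis. Your closing remarks about balancing the halves and isolating the negative copies correctly identify the same points the paper's construction is designed around.
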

\begin{proof}
  Let \( \interI \mymodels \ontologyOhard \) and let \( \DeltaI = \{ \domelemd_1, \domelemd_2, \ldots, \domelemd_n \} \).
  We define an \kl{interpretation} \( \interJ \) as follows:
  \begin{enumerate}
    \item\label{eq:proof1:domain}  \( \DeltaJ \deff \{ \domelemd_1, \domelemd_1', \domelemd_2, \domelemd_2', \ldots, \domelemd_n, \domelemd_n' \}. \)
    \item\label{eq:proof1:concepts} For all \kl{concept names} \( \conceptA \in  \ohardconcepts  \) we put 
      \begin{itemize}
        \item \( \conceptAplus^{\interJ} \deff \conceptA^{\interJ} \deff \{ \domelemd_i \; \mid \; \domelemd_i \in \conceptA^{\interI} \} \cup \{ \domelemd_i' \; \mid \; \domelemd_i \not\in \conceptA^{\interI}  \} \) and \( \conceptAminus^{\interJ} \deff \{ \domelemd_i \; \mid \; \domelemd_i \not\in \conceptA^{\interI} \} \cup \{ \domelemd_i' \; \mid \; \domelemd_i \in \conceptA^{\interI}  \} \),
        \item \( \conceptRealplus^{\interJ} \deff \DeltaI \) and \( \conceptRealminus^{\interJ} \deff \DeltaJ \setminus \DeltaI, \)
      \end{itemize}
      and for all other \kl{concept names} \( \conceptB \) we put \( \conceptB^{\interI} \deff \DeltaI \).
    \item\label{eq:proof1:roles} For each \kl{role name} \( \roler \) we put \( \roler^{\interJ} \deff \roler^{\interI} \). 
  \end{enumerate}
  We first show \( \interJ \mymodels \ontologyOcorr \). To this end, take any name \( \conceptA \in \{ \conceptReal \} \cup \ohardconcepts \).
  We prove~\( \interJ \mymodels \pconditional{\conceptAplus}{\conceptAminus}{0}{0} \), which by \cref{fact:disj} is equivalent to showing disjointness of \( \conceptAplus^{\interJ} \) and~\( \conceptAminus^{\interJ} \). 
  Assume the contrary, \ie{} that there is a \kl{domain} element \( \domelemd \in \conceptAplus^{\interJ} \cap \conceptAminus^{\interJ} \). 
  If \( \domelemd = \domelemd_i \) for some index \( i \) then, by \cref{eq:proof1:concepts}, it means that~\( \domelemd_i \in \conceptA^{\interI} \) and \( \domelemd_i \not\in \conceptA^{\interI} \) at the same time, which is clearly not possible.
  The case when \( \domelemd = \domelemd_i' \) for some index \(i\) is treated similarly.
  Next, by invoking~\cref{eq:proof1:concepts} of the definition of \( \interJ \) we can perform some basic calculations:
  \vspace{-0.75em}
  \begin{multline*}
  |\conceptAplus^{\interJ}| = |\{ \domelemd_i \; \mid \; \domelemd_i \in \conceptA^{\interI} \}| + |\{ \domelemd_i' \; \mid \; \domelemd_i \not\in \conceptA^{\interI} \}|  = |\{ \domelemd_i \; \mid \; \domelemd_i \in \conceptA^{\interI} \}| + |\{ \domelemd_i \; \mid \; \domelemd_i \not\in \conceptA^{\interI} \}|   = |\DeltaI| =  0.5 \cdot | \DeltaJ |.\\
  \quad |\conceptAminus^{\interJ}| = |\{ \domelemd_i \; \mid \; \domelemd_i \not\in \conceptA^{\interI} \}| + |\{ \domelemd_i' \; \mid \; \domelemd_i \in \conceptA^{\interI} \}|  = |\{ \domelemd_i' \; \mid \; \domelemd_i \not\in \conceptA^{\interI} \}| + |\{ \domelemd_i' \; \mid \; \domelemd_i \in \conceptA^{\interI} \}| = |\{ \domelemd_1', \ldots, \domelemd_n' \}| =  0.5 \cdot | \DeltaJ |.
  \end{multline*}
  Hence by \cref{fact:half} we get \( \interJ \mymodels \pconditional{\conceptAplus}{\topconcept}{0.5}{0.5} \)
  and \( \interJ \mymodels \pconditional{\conceptAminus}{\topconcept}{0.5}{0.5} \), finishing the proof of  \( \interJ \mymodels \ontologyOcorr \).

  To prove \( \interJ \mymodels \ontologyOtr \) we take any \kl{GCI} $\tr(\conceptC) \dlsubseteq \tr(\conceptD)$ from $\ontologyOtr$.
  From $\interI \models \ontologyOhard$, we know that $\interI \models \conceptC \dlsubseteq \conceptD$. 
  Note that $\interJ$ is \kl{good-for-encoding}, hence by~\cref{lemma:agreement} it follows that $\interJ \models \tr(\conceptC) \dlsubseteq \tr(\conceptD)$. 
  Thus $\interJ \models \ontologyOtr$.
\end{proof}

\cref{lemma:from_el_to_sel} and \cref{lemma:from_sel_to_el} show that the presented reduction is correct. 
Since our reduction is polynomial for every $\ELneg{}$ ontology $\ontologyOhard$, from~\cref{prop:exp-hard} we can conclude the main theorem of this note.
\begin{thm}
The \kl{consistency problem} for Statistical \( \EL \) is \( \ExpTime \)-hard and remains \( \ExpTime \)-hard even if the only numbers used in \kl{probabilistic conditionals} are \( 0, 0.5 \) and \( 1 \).
\end{thm}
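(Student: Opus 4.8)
The plan is simply to assemble the pieces that are already in place, since the theorem is a corollary of the reduction constructed above. First I would invoke \cref{prop:exp-hard}: the \kl{consistency problem} for \( \ELneg \) \kl{ontologies} is \( \ExpTime \)-hard, and by \cref{prop:normal-form} we may restrict attention to inputs \( \ontologyOhard \) in \kl{normal form}, as the normalisation is polynomial-time and preserves \kl{consistency}. The map \( \ontologyOhard \mapsto \ontologyOred = \ontologyOcorr \cup \ontologyOtr \) defined above is computable in polynomial time and yields an \( \SEL \) \kl{ontology} of \kl{size} polynomial in \( |\ontologyOhard| \). Combining \cref{lemma:from_el_to_sel} with \cref{lemma:from_sel_to_el}, \( \ontologyOred \) is \kl{consistent} if and only if \( \ontologyOhard \) is. Hence \( \ontologyOhard \mapsto \ontologyOred \) is a correct polynomial-time many-one reduction from \( \ELneg \) \kl{consistency} to \( \SEL \) \kl{consistency}, which gives \( \ExpTime \)-hardness of the latter.

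For the sharper statement I would just read off which rationals occur in \( \ontologyOred \). The \kl{ontology} \( \ontologyOcorr \) contains only \kl{probabilistic conditionals} of the forms \( \pconditional{\conceptAplus}{\topconcept}{0.5}{0.5} \), \( \pconditional{\conceptAminus}{\topconcept}{0.5}{0.5} \) and \( \pconditional{\conceptAplus}{\conceptAminus}{0}{0} \), so the only numbers occurring there are \( 0 \) and \( 0.5 \). The ``translation'' \kl{ontology} \( \ontologyOtr \) is a set of \( \EL \) \kl{GCIs} \( \conceptC \dlsubseteq \conceptD \), each of which, as noted above (\cf{}~\cite[Proposition~4]{PenalozaP17}), is equivalent to the \kl{probabilistic conditional} \( \pconditional{\conceptD}{\conceptC}{1}{1} \) and thus uses only the number \( 1 \). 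Consequently every number appearing in \( \ontologyOred \) is one of \( 0, 0.5, 1 \), and each of the three actually occurs, which is exactly the claimed restriction.

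I do not anticipate any genuine difficulty here: all the substance is in \cref{lemma:from_el_to_sel} and \cref{lemma:from_sel_to_el}, which have already been proved. The only points worth a second look are bookkeeping ones — that the normal-form transformation of \cref{prop:normal-form} is a purely syntactic renaming and hence introduces no rationals, and that rewriting the \kl{GCIs} of \( \ontologyOtr \) as \kl{probabilistic conditionals} introduces no number other than \( 1 \) — and both are immediate. With these observations the theorem follows.
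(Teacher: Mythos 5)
Your proposal is correct and matches the paper's own argument: the paper likewise concludes the theorem by combining \cref{lemma:from_el_to_sel} and \cref{lemma:from_sel_to_el} for correctness of the polynomial reduction \( \ontologyOhard \mapsto \ontologyOred \) and then invoking \cref{prop:exp-hard} (with \cref{prop:normal-form} for normalisation). Your explicit check that only the numbers \( 0, 0.5, 1 \) occur in \( \ontologyOcorr \cup \ontologyOtr \) is exactly the observation the paper leaves implicit in the construction.
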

With the already mentioned $\ExpTime$ upper bound we conclude~\cref{thm:main}.

\section{Conclusions}\label{sec:conclusions}
  We have proved that the \kl{consistency problem} for Statistical \( \EL \) \kl{ontologies} is \( \ExpTime \)-complete.
  While the upper bound was derived from the works of Baader and Ecke~\cite{BaaderE17} or Lutz and Schr{\"{o}}der on (the extensions of) \( \ALC \) with cardinality constraints, the main contribution of the paper is the lower bound. 
  Our proof went via a reduction from the \kl{consistency problem} for \( \ELneg \) \kl{ontologies} and heavily relied on the fact that \kl{probabilistic conditionals} can express that exactly half of the \kl{domain} elements belong to a certain \kl{concept}.

  An interesting direction for future work is to consider extensions of other well-known decidable fragments of first-order logic with \kl{probabilistic conditionals}. 
  Promising candidates are the guarded fragment~\cite{AndrekaNB98}, the guarded negation fragment~\cite{BaranyCS15}, the two-variable logic~\cite{GradelKV97} and tamed fragments of existential rules~\cite{MugnierT14}. 
  Another idea is to study query answering~\cite{OrtizS12} in the presence of \kl{probabilistic conditionals}. 
  Some initial results were obtained in~\cite{BaaderBR20}.

\section*{Acknowledgements}
  This work was supported by the Polish Ministry of Science and Higher Education program ``Diamentowy Grant'' no.~DI2017~006447. 
  The author would like to thank Ania Karykowska and Elisa B{\"o}hl for reading the draft of this note as well as Piotr Witkowski for serving as a rubber duck while ``rubber duck debugging'' the proof ideas.
  He also grateful for many grammar corrections and improvement ideas given by anonymous IPL reviewers.

\bibliographystyle{alpha}
\bibliography{references}

\end{document}